\newtheorem{lemma}{Lemma}
\newcolumntype{C}[1]{>{\centering\let\newline\\\arraybackslash\hspace{-2mm}}m{#1}}
\begin{document}

\date{}

\title{Towards Secure Localization in Randomly Deployed Wireless Networks}

\author{Marko~Beko
		and Slavisa~Tomic
\thanks{The authors are with COPELABS, Universidade Lus\'{o}fona de Humanidades e Tecnologias, Campo Grande 376,  1749 - 024 Lisboa, Portugal. (e-mails: slavisa.tomic@ulusofona.pt, beko.marko@ulusofona.pt).}}

\maketitle



\begin{abstract}
Being able to accurately locate wireless devices, while guaranteeing high-level of security against spoofing attacks, benefits all participants in the localization chain (e.g., end users, network operators, and location service providers). On the one hand, most of existing localization systems are designed for innocuous environments, where no malicious adversaries are present. This makes them highly susceptible to security threats coming from interferers, attacks or even unintentional errors (malfunctions) and thus, practically futile in hostile settings. On the other hand, existing secure localization solutions make certain (favorable) assumptions regarding the network topology (e.g., that the target device lies within a convex hull formed by reference points), which restrict their applicability. Therefore, this work addresses the problem of target localization in randomly deployed wireless networks in the presence of malicious attackers, whose goal is to manipulate (spoof) the estimation process and disable accurate localization. We propose a low-complex solution based on clustering and weighted central mass to detect attackers, using only the bare minimum of reference points, after which we solve the localization problem by a bisection procedure. The proposed method is studied from both localization accuracy and success in attacker detection point of views, where closed-form expressions for upper and lower bounds on the probability of attacker detection are derived. Its performance is validated through computer simulations, which corroborate the effectiveness of the proposed scheme, outperforming the state-of-the-art methods.
\end{abstract}

\begin{IEEEkeywords}
Secure localization, malicious node detection, spoofing attacks, ad hoc networks, two-way time of arrival.
\end{IEEEkeywords}



\section{Introduction}
\label{sec:intro}

So far, wireless localization techniques have been developed independently from the communication ones. This is likely to change in the forthcoming fifth generation (5G) networks, where dense access point deployment and large bandwidths are expected to allow for accurate localization, together with low energy consumption~\cite{Lohan:2018,Bourdoux:2020, Witrisal:2019}. On the one hand, highly-accurate localization can bring benefits to both network operators and end users, with techniques like location-aware interference mitigation, power and latency optimized end-to-end communications, and user-personalized location based services~\cite{Lohan:2018} to name a few. On the other hand, it might raise serious privacy and security issues, since various malicious attacks~\cite{Simeone:2016} or inaccurate data acquirement due to equipment malfunctions could be possible to occur. Therefore, an additional requirement for future localization systems is that the estimation process is carried out securely~\cite{Ranganathan:2017, Avoine:2018}.

Conventional localization systems~\cite{Gao:2016}-\cite{Tomic_TVT:2020} are designed for benign environments where no security requirements are needed; thus, they are susceptible to spoofing attacks and employing them in applications where adversaries (able to manipulate (spoof) locations/measurements of other nodes) are present could lead to disastrous consequences (e.g., failure in rescue tasks, drone/self-driving cars collisions, etc.).

A secure-ranging problem has been addressed in the literature, where the focus is on developing strategies to verify that a node is at a claimed distance from another one. Verifiable multilateration, location verification using mobile base stations, and other distance bounding protocols were proposed to withstand attacks \cite{Capkun:2006}, \cite{Singh:2019}. In \cite{Capkun:2006}, the problem of distance reduction attacks was addressed, where distance bounding protocol was employed, together with simple challenge-response schemes. The work in \cite{Singh:2019} addressed the problem of distance enlargement attacks, where a novel modulation scheme based on distance commitment-verification protocol was proposed, in order to detect the attacker. Even though the methods in \cite{Capkun:2006}, \cite{Singh:2019} guarantee high security levels under the assumption that targets are within a convex hull formed by reference points and that direct distance measurements are available, there are countless applications in which at least one of these assumptions does not hold, where \cite{Capkun:2006}, \cite{Singh:2019} could not guarantee security.

Secure localization problem in wireless networks in the presence of malicious adversaries has also attracted attention in the research community \cite{DLiu:2008}-\cite{Won:2019}. A greedy approach to find the location consistent with the largest number of measurements from reference points was explored in \cite{DLiu:2008}. A voting-based scheme was introduced, in which the localization area is divided into a grid and the vote count of each grid point is incremented if its distance from a reference point is similar to the distance measurement from the reference point. An attack-resistant and device-independent method was developed in \cite{He:2009}, where an attack-driven model was specified by using Petri net. Both distance reduction attacks and distance enlargement attacks were considered. In \cite{Garg:2012}, an iterative gradient descent technique with inconsistent measurement pruning was proposed. The work considered mobile sensor networks where some nodes transmit false information. To account for the possibility of malicious nodes, the cost function was updated iteratively by eliminating the reference points with large residues from the localization process. A weighted least squares (WLS) model for localization based on received signal strength (RSS) measurements was proposed in \cite{Mukhopadhyay:2018}, where non-cryptographic uncoordinated attacks were considered. WLS scheme was based on proper weight assignment founded on log-distance model, where nodes \emph{closer} to the target received large weights and vice versa. In \cite{XLiu:2019}, two algorithms that utilize density-based spatial clustering to detect abnormal clusters were proposed, which were further examined via a sequential probability ratio test. First, an adaptive clustering algorithm was performed in order to reduce the number of initial parameters, and avoid situations where local outliers are categorized into normal clusters. Then, a sequential probability ratio test based on consistency characteristics of time of arrival (TOA) and RSS measurements was employed to provide accurate detection results. The work in \cite{Won:2019} introduced two attack models based on the knowledge of target location, namely aligned node location and inside-attack. The former one exploits nodes that are aligned on a line, while the latter one disables degree of consistency filtering algorithm by placing malicious location references inside benign ones. To protect against these attacks the authors in \cite{Won:2019} proposed a novel beacon placement strategy and a filtering technique that can filter out malicious location references introduced by inside-attacks.

A similar problem to detecting attackers in secure localization setting is the problem of link identification in mixed line-of-sight (LOS) and non-LOS (NLOS) environments, since bias introduced in NLOS links can be interpreted to some extent as an attack carried out by a malicious node. The key difference between the two problems is, however, the fact that the attackers can vary their attack intensity or even cooperate between themselves in order to compromise the localization task and avoid being detected, whereas obstacles do not have such capabilities. Nevertheless, the problem of NLOS identification has gained attention in the research society recently, where methods based on hypothesis testing \cite{Zhang:2014}, quasi-Newton method \cite{Yin:2014}, and non-parametric machine learning \cite{Nguyen:2015} were proposed. Recently, a new approach based on variational Bayesian localization (VBL) was introduced in \cite{Li:2020}, where the authors used approximations to the true posterior distributions and applied the variational framework to find the optimal variational distributions. This was done in an iterative fashion by employing a set of particles which might increase the computational burden of a such an approach, but the authors in \cite{Li:2020} also considered imperfect knowledge about the reference points' locations.

In contrast to \cite{Capkun:2006}-\cite{Li:2020}, the present work addresses the problem of target localization in randomly deployed wireless networks in the presence of malicious adversaries, using only the bare minimum of reference points. The proposed algorithm exploits clustering and weighted central mass (WCM) in order to determine an initial target location. It then takes advantage of this solution to determine distance estimates from it to all reference points which, together with threshold-based keying, are used to detect attackers. Hence, the target localization and the attacker detection problems are coupled and addressed in conjunction, rather than independently. The detected attackers are then removed from the estimation process, and the localization problem is converted into a generalized trust region sub-problem (GTRS) framework, which is solved \emph{exactly} by a bisection method. The proposed method is studied from both localization accuracy and success in attacker detection perspectives. It will be shown here that, malicious attacks of relatively low intensity are preferred to be treated as measurement noise, rather than detecting them and excluding the respective reference point from the localization process. This is because low-intensity attacks do not have severe impact on the estimation accuracy and, when the number of reference points is scarce, any mildly-corrupted measurement is a valuable asset that should not be overlooked. As the attack intensity increases, naturally, one desires to accomplish high detection rates in order to exclude corrupted information and enhance the localization accuracy. Therefore, tuning the threshold to balance out the performance for both metrics turns out to play an important, but, as we will see, not a crucial role for the overall performance of the proposed method. Also, we will see that the new method outperforms by far the state-of-the-art approaches, both in terms of localization accuracy and success in attacker detection.

The main contributions of the present work are threefold, and are summarized in the following.
\begin{itemize}

\item{It introduces a novel scheme for detecting corrupted reference points based on a simple geometrical approach and threshold-based keying, which does not depend on the network topology nor prior knowledge about additional parameters (e.g., noise variance).}

\item{It derives theorethical upper and lower bounds on the achievable probability of detection of the proposed detection scheme.}

\item{It presents an \emph{exact} solution to the localization problem based on bisection procedure, which, combined with the proposed detection scheme, results in an efficient and secure localization algorithm that requires only the bear minimum of reference points.}

\end{itemize}

The remainder of this work is organized as follows. In Section~\ref{sec:prob_form}, the measurement and attacker models are introduced, and the localization problem is formalized. Section~\ref{sec:wcm_gtrs} describes the proposed scheme for attacker detection, together with the proposed estimator for target localization. The performance of the proposed method is assessed in Section~\ref{sec:results}, while Section~\ref{sec:conclusions} summarizes the main findings of the work.



\section{Problem Formulation}
\label{sec:prob_form}

Consider a $q$-dimensional ($q = 2$ or $3$) wireless network composed of $N$ anchor (reference) nodes, whose true locations are denoted by $\boldsymbol{a}_i$, $i = 1, ..., N$, and a target, whose true location is denoted by $\boldsymbol{x}$. We assume that the $k$-th distance measurement ($k = 1, ..., K$) between the target node and the $i$-th anchor node is obtained from ultra-wide band (UWB) ranging systems, namely through the two-way time of arrival (TW-TOA) observations, where nodes measure the time of propagation of the radio signal between them, which can be modeled \cite{Gao:2016,Tomic_SPL:2018,Gholami:2016} as
\begin{equation}
t_{i,k} = \frac{\|\boldsymbol{x}-\boldsymbol{a}_i\|}{c} + \frac{T_{i,k}}{2} + \tilde{n}_{i,k},
\label{eq:model_original}
\end{equation}
where $c$ is the propagation speed of the signal, $T_{i,k}$ is the (known) processing time of the signal at the $i$-th anchor (also known as the turn-around time) and $\tilde{n}_{i,k}$ is a random (positive) delay introduced by the target during packet interception, modeled as a positive-mean Gaussian random variable, i.e., $\tilde{n}_{i,k} \sim \mathcal{N}(\tilde{\mu}_{i,k}, \tilde{\sigma}_{i,k}^2)$. For the sake of simplicity and better clarity of the following derivations, let us consider the case where $K=1$ so that we can remove the subscript $k$ in \eqref{eq:model_original}. The TW-TOA ranging based on IEEE 802.15.4a protocol~\cite{DAmico:2010} is illustrated in Fig.~\ref{fig:tw_toa_protocol}.
\begin{figure}
\centering
\hspace*{-0mm}\includegraphics[width=\linewidth]{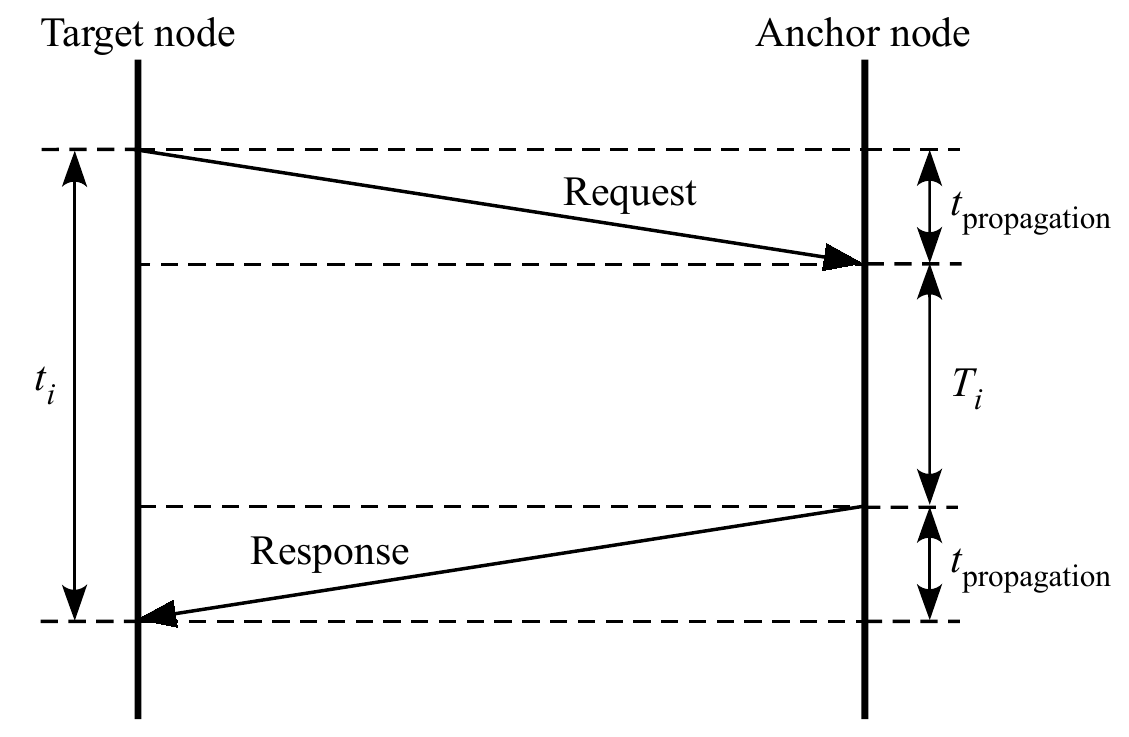}
\vspace*{-5mm}\caption{Illustration of TW-TOA ranging based on IEEE 802.15.4a protocol.}
\label{fig:tw_toa_protocol}
\end{figure}

By multiplying \eqref{eq:model_original} by $c$ and subtracting $c\tilde{\mu}_i$ from both sides, the following range measurement model is obtained
\begin{equation}
d_i = \|\boldsymbol{x}-\boldsymbol{a}_i\| + \frac{d_{T_i}}{2} + n_i,
\label{eq:tw_toa_meas}
\end{equation}
where $d_i = c t_i - c\tilde{\mu}_i$, $d_{T_i} = c T_i$, and $n_i = c(\tilde{n}_i - \tilde{\mu}_i)$, i.e., $n_i \sim \mathcal{N}(0, \sigma_i^2)$, with $\sigma_i = c \tilde{\sigma}_i$. For simplicity, we set $d_{T_i} = 0$, since we assume here that $T_i$ is known. Furthermore, without loss of generality, we assume that $\sigma_1 = \sigma_2 = \hdots = \sigma_N = \sigma$.

In this work, it is assumed that only external attackers are present in the network and that they can spoof a measurement of an anchor node according to
\begin{equation}
\ddot{d}_i = d_i + \delta,
\label{eq:corrup_meas}
\end{equation}
with $\delta \geq 0$ denoting the intensity of the measurement corruption\footnote{Technically, $\delta$ can take on any non-negative value, but it is intuitively clear that extremely large values (e.g., $\delta \gg 1$) would most likely expose the attacker, as we will see in Section~\ref{sec:results}.}. Note that, from the definition in \eqref{eq:corrup_meas}, it follows that the distance measurement of a corrupted anchor node can only be enlarged, since it is physically impossible for the external attacker to reduce it due to the propagation speed of the electromagnetic waves \cite{Capkun:2006}. This type of an attack is even more hazardous than distance-reduction attack, since an adversary only needs to obliterate/distort the authentic signal and replay its delayed version, without compromising any upper-layer protocols during the process~\cite{Singh:2019}.

Given the distance measurements in~\eqref{eq:tw_toa_meas}, one could determine the location of the target node according to the maximum likelihood (ML) criterion~\cite[Ch. 7]{Kay:1993}, as follows
\begin{equation}
\widehat{\boldsymbol{x}} = \underset{\boldsymbol{x}}{\text{arg\,min}} \displaystyle\sum_{i=1}^N \left( d_i - \|\boldsymbol{x}-\boldsymbol{a}_i\| \right)^2.
\label{eq:mle}
\end{equation}

Nevertheless, the problem in~\eqref{eq:mle} is non-convex and does not have a solution in closed-form. Besides, if a measurement is corrupted by an external attacker and one employs it in the localization process, the obtained location estimate could be \emph{far away} from the true one. Therefore, in the following section, we propose a method to first determine which of the distance measurements are being spoofed, followed by an efficient approach to circumvent the non-convexity of~\eqref{eq:mle} without any knowledge about $\sigma$ and estimate the location of the target node by just a bisection procedure.



\section{The Proposed Method}
\label{sec:wcm_gtrs}

This section describes the proposed method for secure localization in random wireless networks. It is organized into three parts: in the first part, the proposed method for attacker detection based on WCM is described, followed by a discussion on the probability of attacker detection presented, while the last part proposes a method for estimating the target's location based on converting the localization problem into a GTRS framework, which is then solved \emph{exactly} by merely a bisection procedure.

\subsection{Attacker Detection}
\label{subsec:attack_det}

For the sake of simplicity, let us start by considering that there is only one attacker in the network; later on, the proposed solution will be extended to a more general case. Because one does not know which of the anchor nodes is corrupted, all distance measurements are treated as only noise-corrupted in the beginning. According to distance measurements and the known locations of anchor nodes, one can form circles, $c_i$, $i = 1, ..., N$, centered at anchor nodes with radii equal to their respective distance measurements to the target and calculate the intersection points of the circles (provided that they exist) as
\begin{equation}
\boldsymbol{p}_{ij} = \boldsymbol{p}_0 \pm \boldsymbol{t}, \,\,\, \text{for} \,\,\, i = 1, ..., N-1, j = i+1, ..., N,
\label{eq:intersections}
\end{equation}
where $\boldsymbol{p}_0 = \frac{\boldsymbol{a}_i + \boldsymbol{a}_j}{2} + \frac{d_i^2-d_j^2}{2 \|\boldsymbol{a}_j - \boldsymbol{a}_i\|^2} \left(\boldsymbol{a}_j - \boldsymbol{a}_i\right)$ and $\boldsymbol{t} = \frac{\sqrt{k}}{2 \|\boldsymbol{a}_j - \boldsymbol{a}_i\|^2} \boldsymbol{M} (\boldsymbol{a}_j - \boldsymbol{a}_i)$, with $k = \left( (d_i + d_j)^2 - \|\boldsymbol{a}_j - \boldsymbol{a}_i\|^2 \right) \left( \|\boldsymbol{a}_j - \boldsymbol{a}_i\|^2 - (d_j - d_i)^2 \right)$ and $\boldsymbol{M} = [0 \,\, -1; 1 \,\,\, 0]$; see Fig.~\ref{fig:circle_int_points}. The intersection points obtained from \eqref{eq:intersections} are then used to form clusters based on their physical proximity. The size of the clusters depends on the number of potential attackers, as explained in the following.
\begin{figure}
\centering
\hspace*{-0mm}\includegraphics[width=\linewidth]{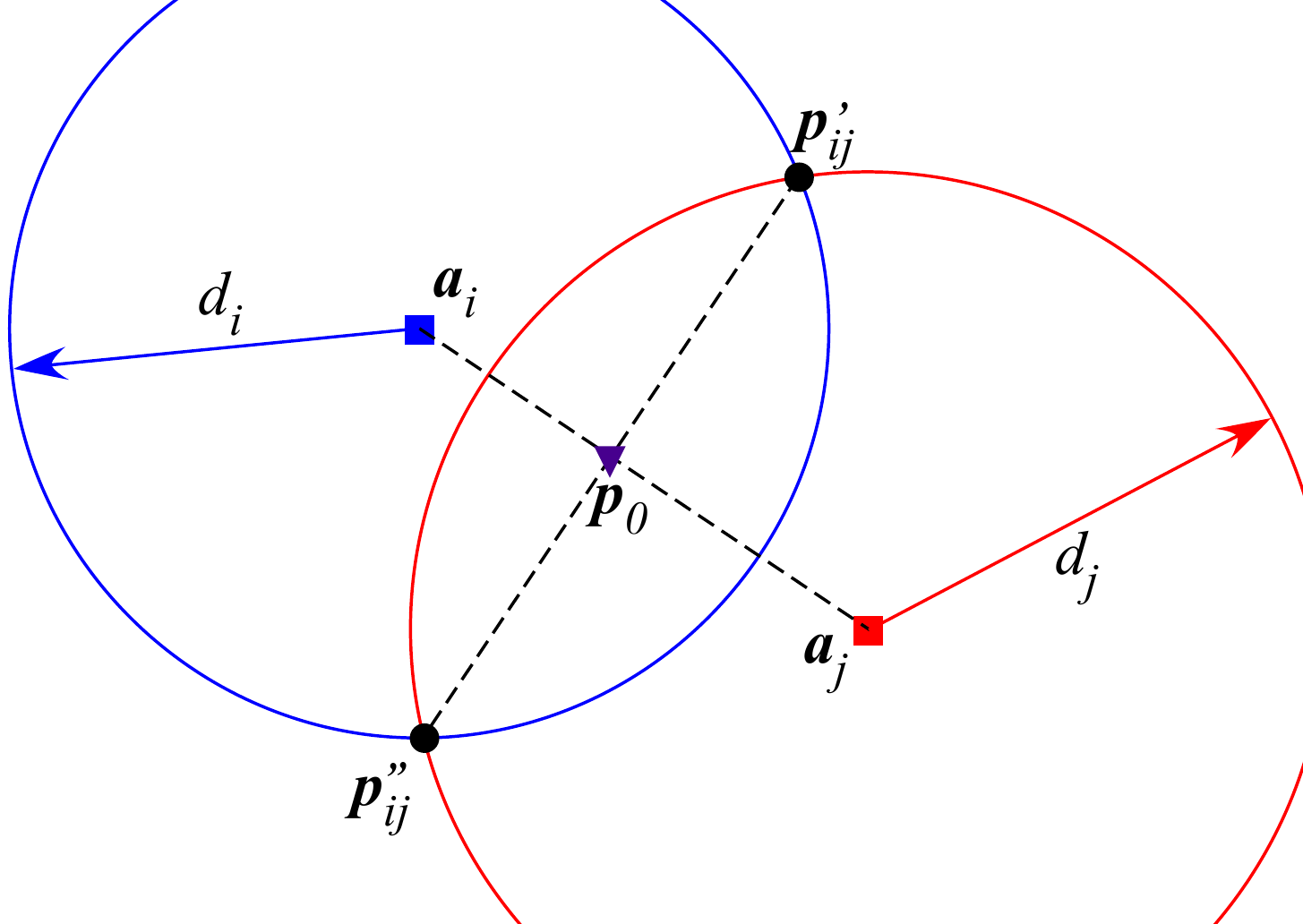}
\vspace*{-5mm}\caption{Illustration of finding the intersection points between a pair of circles.}
\label{fig:circle_int_points}
\end{figure}

For the sake of notation simplicity, we define the set of all anchor nodes as $\mathcal{A} = \{i: 1 \leq i \leq N\}$, the set of all intersection points $\mathcal{P} = \{ \boldsymbol{p}_{ij}: c_i \cap c_j \neq \varnothing \}$, and the tuple set of potentially corrupted anchor nodes, i.e., of anchors whose circles do not intersect as $\mathcal{C} = \{ (i, j): i, j \in \mathcal{A} \wedge c_i \cap c_j = \varnothing \}$, where the notation $c_i \cap c_j = \varnothing$ is used to denote that the circles corresponding to the $i$-th and $j$-th anchor nodes do not intersect.

Three cases are then distinguished: (a) all circles intersect with each other, (b) a circle has no intersection points with any of the remaining ones, and (c) a circle intersects with at least one of the other circle, but not with all of them. These three cases are illustrated in Fig.~\ref{fig:intersections}.
\begin{figure}
\begin{subfigure}{.5\textwidth}
\hspace*{0mm}\includegraphics[width=\textwidth]{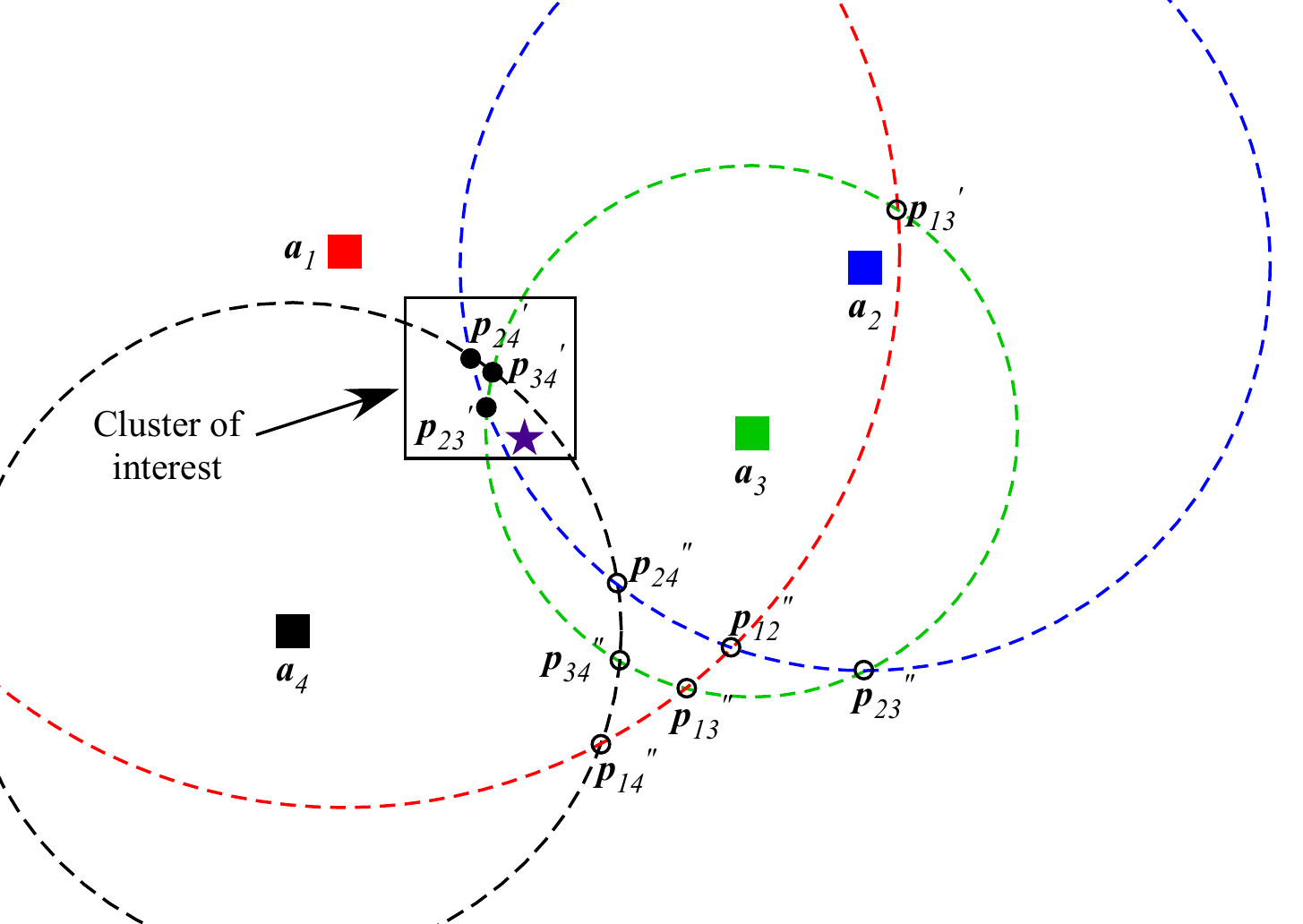}
\caption{All circles intersect each other}
\label{fig:intersections_a}
\end{subfigure}
\vspace*{3mm}
\begin{subfigure}{.5\textwidth}
\hspace*{0mm}\includegraphics[width=\textwidth]{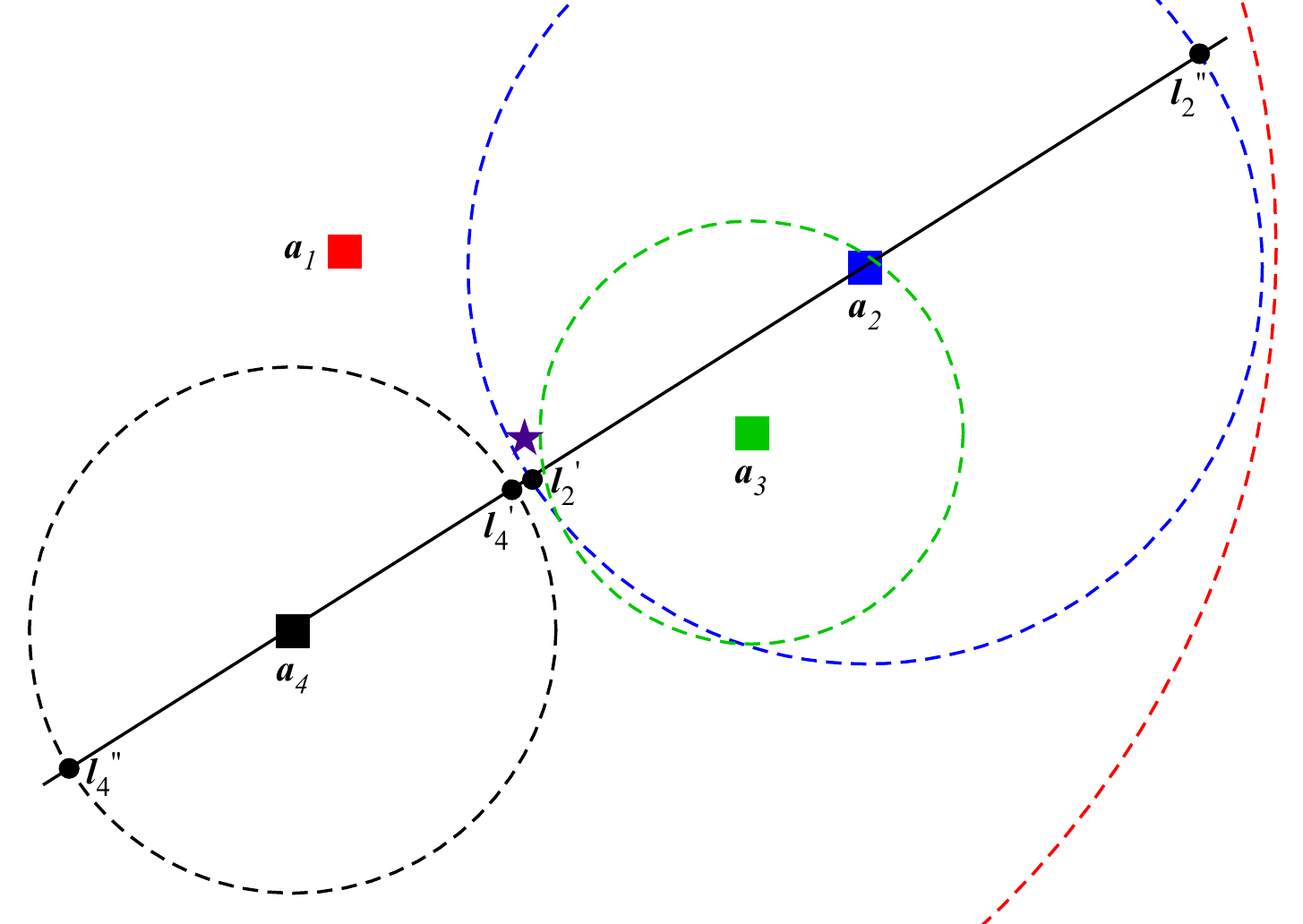}
\caption{A circle has no intersection with other ones}
\label{fig:intersections_b}
\end{subfigure}
\vspace*{3mm}
\begin{subfigure}{.5\textwidth}
\hspace*{0mm}\includegraphics[width=\textwidth]{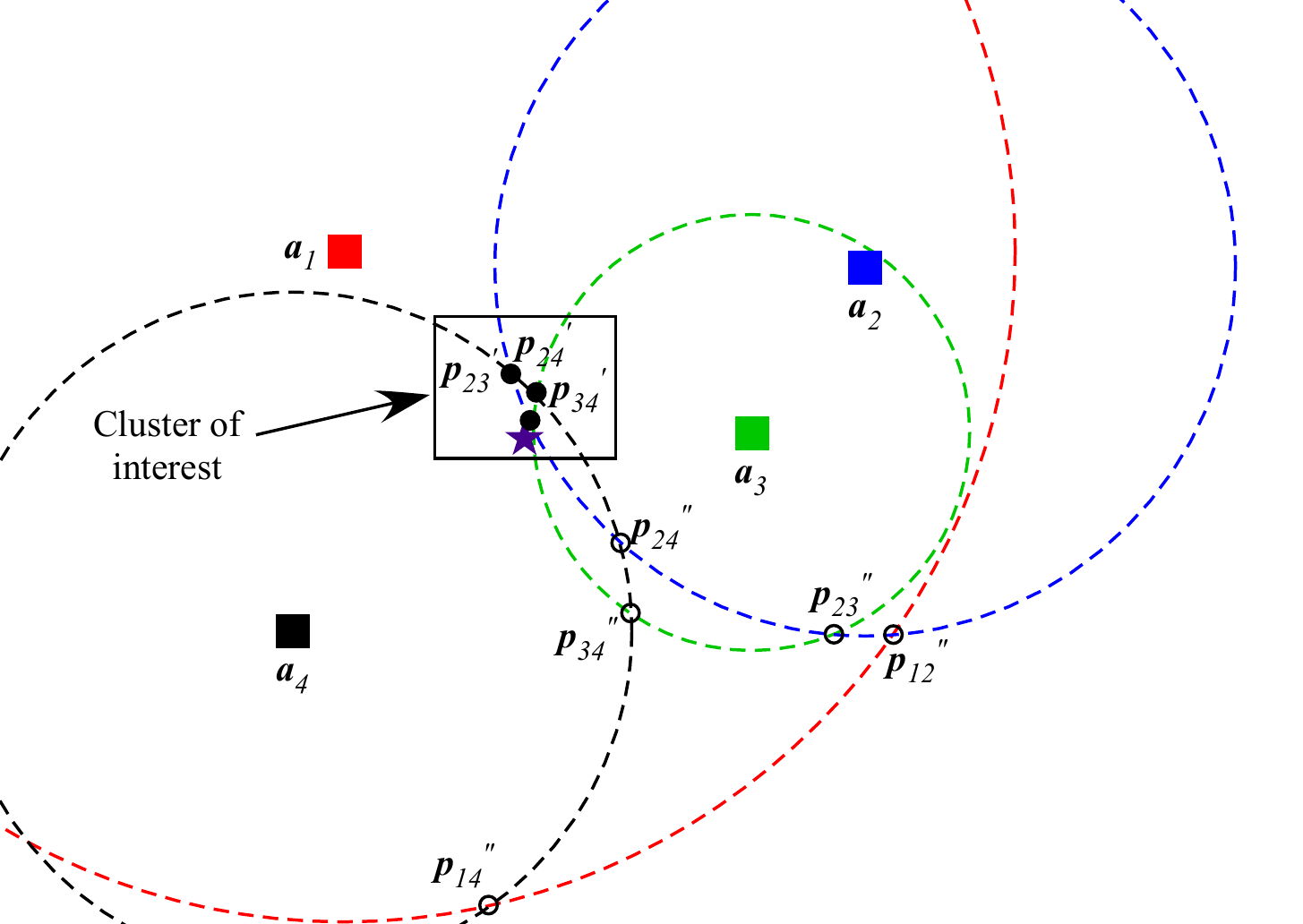}
\caption{A circle intersects some, but not all other ones}
\label{fig:intersections_c}
\end{subfigure}
\vspace*{0mm}
\caption{Illustration of the three considered cases regarding circle intersections.}
\label{fig:intersections}
\end{figure}

Case (a): In the case where all circles intersect, Fig.~\ref{fig:intersections_a}, we simply choose the smallest cluster of size $N-1$, i.e., the closest $N-1$ points to each other. One way to do this is to calculate the Lebesgue measure of a set of points and choose the points with the smallest measure. These points are considered as \emph{honest} points and are stacked in the set $\mathcal{H}=\left\{ \boldsymbol{p}_{ij}: (i,j) \notin \mathcal{C} \wedge \mathcal{P}' \subseteq \mathcal{P} \wedge |\mathcal{P}'| = N-1 \right\}$, with $|\bullet|$ being the cardinality (the number of elements) in a set and $\mathcal{P}'$ denoting the set of $N-1$ \emph{honest} points with minimal Lebesgue measure (combination of $N-1$ points forming the smallest area).

Case (b): In the case where a circle has no intersections with other circles, one only needs to check if an intersection could occur if the radius of the circle is increased. If that is the case, the corresponding anchor node is not considered corrupted, since an attacker can only enlarge its distance measurement; otherwise, the corresponding anchor node is considered corrupted, and the existing intersection points are treated as \emph{honest}. This can easily be checked by drawing a line through the corresponding anchor node and the remaining ones to find the intersection points of the line and a pair of circles, $\boldsymbol{l}_i'$ and $\boldsymbol{l}_i''$ in Fig.~\ref{fig:intersections_b}. Afterwards, it suffices to compare the distance from the center of the circle with no intersections to the two closest intersection points on a pair circles.

Case (c): In the last case, Fig.~\ref{fig:intersections_c}, a set of $N-|\mathcal{C}| \geq q+1$ \emph{honest} points is considered, i.e., $\mathcal{H}=\left\{ \boldsymbol{p}_{ij}: (i,j) \notin \mathcal{C} \wedge \mathcal{P}' \subseteq \mathcal{P} \wedge |\mathcal{P}'| = N-|\mathcal{C}| \right\}$. Naturally, the condition $N-|\mathcal{C}| \geq q+1$ corresponds to the minimum number of points required to localize the target in a $q$-dimensional space.

A preliminary estimate of the target's location can then be obtained as the weighted central mass of \emph{honest} points as
\begin{equation}
\hat{\boldsymbol{x}}^{(1)} = \displaystyle\sum_{(i,j):\boldsymbol{p}_{ij} \in \mathcal{H}} \omega_{ij} \boldsymbol{p}_{ij},
\label{eq:x_est1}
\end{equation}
where $\omega_{ij} = \frac{1/\bar{d}_{ij}}{1/\sum_{(i,j):\boldsymbol{p}_{ij} \in \mathcal{H}}\bar{d}_{ij}}$ with $\bar{d}_{ij} = \frac{d_i+d_j}{2}$, is the weight assigned to each of the \emph{honest} points. The weights are chosen deliberately in this form in order to assign smaller weights to a pair of largest (on average) distance measurements so that the influence of a corrupted anchor node is minimized, in the case that such a node is not exposed so far.

Once the initial target location estimate is available, it is exploited to detect the corrupted anchor node and further enhance the localization accuracy as follows. First, the distance estimates are calculated between $\boldsymbol{a}_i$, $i = 1, ..., N$, and $\hat{\boldsymbol{x}}^{(1)}$ as
\begin{equation}
\hat{d}_i = \|\hat{\boldsymbol{x}}^{(1)} - \boldsymbol{a}_i\|.
\label{eq:dist_est}
\end{equation}
Then, the relative error between the measured distances and the estimated ones is calculated as
\begin{equation}
e_i = \frac{|d_i - \hat{d}_i|}{m_d},
\label{eq:rel_error}
\end{equation}
where $m_d$ denotes the median of the distance measurements $d_i, i = 1, ..., N$.

If $e_m > \tau$, where $e_m = \max\left\{ e_1, ..., e_N \right\}$ and $\tau$ is a pre-defined constant from the interval $[0, 1]$ used to set a threshold in order to \emph{distinguish} between measurement noise and attack, the $m$-th anchor node is marked as an attacker and included in the (singular) set of attackers, i.e., $\mathcal{S} = \{m: m \in \mathcal{A} \, \wedge \, e_m > \tau\}$; otherwise, no attacker is detected. Note that in~\eqref{eq:rel_error}, we opted to divide by the median of all measured distances. The main reason for this choice is that the median is a metric robust to outliers as it is well known, which allows us to avoid dividing by an excessively corrupted distance measurement in~\eqref{eq:rel_error}. Preventing this from happening is of interest since the attacker can only enlarge the measured distance, and employing corrupted distance in~\eqref{eq:rel_error} would likely decrease the value of the relative error, and consequently reduce our chances of detecting the attack. Obviously, opting for the median of all measured distances is not the only choice that would allow us to have better chances of detection, as we could also select the average of the measured distances or even the estimated ones. However, both of these parameters are directly dependent on the corrupted measurement, and thus, extremely corrupted measurements would likely lead to undesired effects.

Note that we could have also opted for employing some sort of likelihood ratio test, such as a generalized likelihood ratio test (GLRT)~\cite{Besson:2017, Xie:2017}, \cite[Ch. 4]{Kay:1998}, to detect the attacker. However, such schemes typically require a tuning of thresholds for a chosen false alarm (probability of detection) in order to obtain the desired probability of detection (false alarm), and might require additional knowledge of model parameters (e.g., noise variance)~\cite{Xie:2017}. Hence, the proposed detection scheme seems intuitive and simple, and does not require any additional knowledge about the model parameters.

\subsection{Probability of Detection}
\label{subsec:prob_det}

Combining the measurement model in~\eqref{eq:tw_toa_meas} with~\eqref{eq:corrup_meas} one gets
\begin{equation}
d_i =
\begin{cases}
      \|\boldsymbol{x}-\boldsymbol{a}_i\| + n_i, & \text{if} \,\, i \neq a\\
      \|\boldsymbol{x}-\boldsymbol{a}_i\| + \delta + n_i, & \text{if} \,\, i = a,\\
\end{cases}
\label{eq:our_model}
\end{equation}
with $a$ denoting the attacker link. The problem of attacker detection is then very similar to the problem of non line-of-sight identification~\cite[Ch. 16.4]{Sharp:2019}. Nevertheless, the proposed approach combines localization and detection problems (since in our procedure the probability of detection depends directly on the solution of WCM), rather than tackling them independently.

Let us define $e_i = \abs{y_i}$, where $y_i = \frac{d_i - \hat{d}_i}{m_d}$, $i = 1, ..., N$, such that $y_i \sim \mathcal{N}(\mu_{y_i}, \sigma_{y_i}^2)$ and $y_a \sim \mathcal{N}(\mu_{y_a}, \sigma_{y_a}^2)$, with $\mu_{y_i} = \frac{\|\boldsymbol{x} - \boldsymbol{a}_i\| - \hat{d}_i}{m_d}$ and $\mu_{y_a} = \frac{\|\boldsymbol{x} - \boldsymbol{a}_a\| + \delta - \hat{d}_a}{m_d}$, $i = 1, ..., N$, $i \neq a$, and $\sigma_{y_i}^2 =  \left(\frac{\sigma}{m_d}\right)^2$, $i = 1, ..., N$. Hence, $e_i$ represents a random variable that follows the folded normal distribution~\cite{Tsagris:2014}.

According to the proposed procedure described in the previous section, the probability of detection is given by
\begin{equation}
P_D = P\left( e_a > \underset{i, \, i \neq a}{\max}\left\{ e_i, \, \tau \right\} \right).
\label{eq:prob_det_gen}
\end{equation}

Even though it is not possible to obtain a solution in a closed form for~\eqref{eq:prob_det_gen}, we can obtain a lower bound and an upper bound on the probability according to the following lemma.
\begin{lemma}
Probability of detection, $P_D$, can be bounded by $LP_D \leq P_D \leq UP_D$, where $LP_D$ and $UP_D$ denote a lower bound and an upper bound on $P_D$, respectively.
\end{lemma}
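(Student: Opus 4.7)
The plan is to bound $P_D$ from above and below by replacing the event in~\eqref{eq:prob_det_gen} with suitable super- and sub-events whose probabilities admit closed-form expressions in terms of the folded normal cumulative distribution function (CDF). First I would rewrite the detection event as the intersection
\begin{equation*}
\left\{ e_a > \max_{i \neq a}\{e_i, \tau\} \right\} = \left\{e_a > \tau\right\} \cap \bigcap_{i \neq a}\left\{e_a > e_i\right\},
\end{equation*}
which exposes the two qualitatively different constraints: a deterministic threshold one and $N-1$ ranking ones against the other folded normal variables.

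For the upper bound, I would drop the pairwise constraints $\{e_a > e_i\}$ and retain only the threshold one, obtaining $P_D \leq UP_D := P(e_a > \tau)$. Since $e_a = |y_a|$ follows the folded normal distribution with parameters $(\mu_{y_a}, \sigma_{y_a}^2)$ introduced before the lemma, this probability has a closed form in terms of the standard normal CDF $\Phi(\cdot)$, namely
\begin{equation*}
UP_D = 1 - \Phi\!\left(\tfrac{\tau - \mu_{y_a}}{\sigma_{y_a}}\right) + \Phi\!\left(\tfrac{-\tau - \mu_{y_a}}{\sigma_{y_a}}\right).
\end{equation*}

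For the lower bound, I would strengthen the constraints by intersecting with $\{\max_{i\neq a} e_i < \tau\}$; since $\{e_a > \tau\} \cap \{\max_{i\neq a} e_i < \tau\}$ is contained in the detection event, its probability lower-bounds $P_D$. Conditioning on the preliminary WCM estimate $\hat{\boldsymbol{x}}^{(1)}$ and on the median $m_d$, each $e_i = |y_i|$ becomes a function of only the independent noise $n_i$, so the $e_i$'s are conditionally independent and the lower bound factorizes as
\begin{equation*}
P_D \geq LP_D := P\!\left(e_a > \tau\right)\prod_{i \neq a} P\!\left(e_i < \tau\right),
\end{equation*}
with every factor being a folded normal CDF evaluated at $\tau$ with parameters $(\mu_{y_i}, \sigma_{y_i}^2)$, hence again in closed form.

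The main obstacle is the statistical coupling among the $e_i$'s: both $\hat{d}_i = \|\hat{\boldsymbol{x}}^{(1)} - \boldsymbol{a}_i\|$ and the common denominator $m_d$ are functions of all of the measurements $d_1,\ldots,d_N$, so marginally the $y_i$'s are correlated and neither bound is trivially separable. The cleanest route is to perform both bounds conditionally on $\hat{\boldsymbol{x}}^{(1)}$ and $m_d$, where conditional independence of the $e_i$'s does hold, and then to absorb this conditioning into the folded normal parameters $\mu_{y_i}$, $\sigma_{y_i}^2$ already defined in the paper. Justifying that this conditional-independence step yields the stated unconditional bounds (and verifying that the resulting expressions are tight enough to be informative in the numerical study of Section~\ref{sec:results}) is the step I expect to require the most care.
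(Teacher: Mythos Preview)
Your proposal is correct and matches the paper on the upper bound: both take $UP_D = P(e_a > \tau)$, obtained by dropping the ranking constraints and keeping only the threshold one, and both evaluate it via the folded normal CDF exactly as you wrote. Your lower bound is also correct and coincides with what the paper calls $LPD2$: it uses the sub-event $\{e_a>\tau\}\cap\{\max_{i\neq a}e_i\leq\tau\}$ and factorizes assuming (conditional) independence of the $e_i$'s. Your explicit remark that this independence only holds conditionally on $\hat{\boldsymbol{x}}^{(1)}$ and $m_d$ is in fact more careful than the paper, which simply treats $\hat{d}_i$ and $m_d$ as fixed once the WCM step has run.

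Where the paper goes further is in deriving a second lower bound, $LPD1$, and then setting $LP_D=\max\{LPD1,LPD2\}$. For $LPD1$ the paper works on the complement, applies a union bound $P(\cup_i A_i)\leq\sum_i P(A_i)$ to obtain $P_D\geq 1-\sum_{i\neq a}P(|y_a|\leq|y_i|)-P(|y_a|\leq\tau)$, and then computes each pairwise term $P(|y_a|<|y_i|)$ in closed form by rotating the $(y_a,y_i)$ plane by $\pi/4$: because $y_a$ and $y_i$ are independent Gaussians with equal variance, an orthogonal transformation preserves independence, and the rotated region $\{|y_a|<|y_i|\}$ becomes the union of two opposite quadrants, giving a product of $Q$-functions. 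This extra bound can be tighter than your $LPD2$ in regimes where several non-attacker $e_i$'s have non-negligible probability of exceeding $\tau$ individually (so the product in $LPD2$ is pessimistic), yet $e_a$ still dominates each of them pairwise. Your route is simpler and already proves the lemma as stated; the paper's additional rotation-based bound buys tightness at the cost of a longer derivation.
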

\begin{proof}
See Appendix~\ref{app:PD}.
\end{proof}

Notice that $P_D$ might be enhanced by simply keeping in mind that the attacker can only enlarge a distance measurement and by following the procedure in Section~\ref{subsec:attack_det} (Case (b)). 

\subsection{Target Localization}
\label{subsec:tar_loc}

After the proposed attacker detection procedure is executed, we turn to the localization problem itself. Our basic idea is to exclude any \emph{corrupted} anchor node which might be detected in the previous step and determine the unknown target location via a bisection procedure, by resorting to \emph{non-corrupted} radio measurements only.

First, we define weights of the form $$w_i = \frac{d_i^{-1}}{\sum_{i\in\mathcal{A}\setminus\mathcal{S}}d_i^{-1}}$$ in order to assign more belief to \emph{nearby} links. Then, according to~\eqref{eq:our_model}, the localization problem can be posed in the form of a weighted squared range approach as
\begin{equation}
\underset{\boldsymbol{x}}{\text{minimize}} \displaystyle\sum_{i\in\mathcal{A}\setminus\mathcal{S}} w_i \left( \| \boldsymbol{x} - \boldsymbol{a}_i \|^2 - d_i^2 \right)^2,
\label{eq:sr-wls_1}
\end{equation}
which, by developing the squared-norm term within the brackets in~\eqref{eq:sr-wls_1} and introducing an auxiliary variable $\alpha = \| \boldsymbol{x} \|^2$, can be written in vector form as
\begin{equation}
\underset{\boldsymbol{y} = [\boldsymbol{x}^T, \alpha]^T}{\text{minimize}} \,\, \left\{ \|\boldsymbol{W}(\boldsymbol{H}\boldsymbol{y} - \boldsymbol{h})\|^2: \boldsymbol{y}^T \boldsymbol{F} \boldsymbol{y} + 2 \boldsymbol{f}^T \boldsymbol{y} = 0 \right\},
\label{eq:sr-wls}
\end{equation}
where $\boldsymbol{W} = \text{diag}(\boldsymbol{w})$, with $\boldsymbol{w} = [\sqrt{w_i}]^T$ and $\text{diag}(\bullet)$ denoting a diagonal matrix where entries on the main diagonal are equal to elements of $\bullet$, and
\begin{equation}
\boldsymbol{H} =
\begin{bmatrix}
\vdots\\
-2 \boldsymbol{a}_{i}^T & 1\\
\vdots
\end{bmatrix},
\,
\boldsymbol{h} =
\begin{bmatrix}
\vdots\\
d_i^2 - \|\boldsymbol{a}_{i}\|^2\\
\vdots
\end{bmatrix},
\nonumber
\end{equation}
\begin{equation}
\boldsymbol{F} =
\begin{bmatrix}
\boldsymbol{I}_2 & \boldsymbol{0}_{2 \times 1}\\
\boldsymbol{0}_{1 \times 2} & 0\\
\end{bmatrix},
\,
\boldsymbol{f} =
\begin{bmatrix}
\boldsymbol{0}_{2 \times 1}\\
-\frac{1}{2}\\
\end{bmatrix},
\nonumber
\end{equation}
with $\boldsymbol{I}_v$ representing the identity matrix of size $v$, and $\boldsymbol{0}_{g \times z}$ denoting the all zero entry matrix of size $g \times z$.

The problem in \eqref{eq:sr-wls} is known as GTRS in the literature~\cite{More:1993, Beck:2008, Tomic_TVT:2015, Tomic_TVT:2016, Tomic_TVT:2020}. Its main particularities are: minimization of a quadratic objective function over a quadratic constraint. Even though GTRS is non-convex in general, it is a monotonically decreasing function over a readily computable interval. Therefore, GTRS is quite convenient for solving via bisection mechanism. We outline the procedure for solving~\eqref{eq:sr-wls} in Appendix~\ref{app:GTRS}.

Finally, to conclude this section, we summarize the generalized version of the proposed method in a universal setting as a pseudo-code in Algorithm~\ref{al:sr-wls}.
\begin{algorithm}[H]\footnotesize
\caption{\hspace*{2mm}Summary of the proposed algorithm}
\begin{algorithmic}[1]
\REQUIRE $\boldsymbol{a}_i, \,\, d_{i,k}, \,\, 1 \leq i \leq N, \,\, 1 \leq k \leq K$

\STATE \textbf{Initialize:} $\mathcal{S} = \varnothing$

{\fontfamily{qcr}\selectfont //Calculate all intersection points}
\STATE $\boldsymbol{p}_{ij} \leftarrow$ \eqref{eq:intersections}

{\fontfamily{qcr}\selectfont //Find circles with no intersections}
\WHILE{$c_i \cap c_j = \varnothing \, \& \, \|\boldsymbol{a}_i - \boldsymbol{l}_i\| > \|\boldsymbol{a}_i - \boldsymbol{l}_j\|, \forall j \in \mathcal{A}$}
	\STATE $\mathcal{S} \leftarrow \mathcal{S} \cup \left\{ i \right\}$
	\STATE $\mathcal{A} \leftarrow \mathcal{A} \setminus \left\{ i \right\}$
	\IF{$|\mathcal{A}| = q+1$}
		\STATE Go to step $19$
	\ENDIF
\ENDWHILE

{\fontfamily{qcr}\selectfont //Obtain initial location estimate}
\STATE $\hat{\boldsymbol{x}}^{(1)} \leftarrow$ \eqref{eq:x_est1}

{\fontfamily{qcr}\selectfont //Estimate attack intensity}
\STATE $\widehat{\delta}_i^{(1)} \leftarrow \frac{\sum_{k=1}^K \left( d_{ik} - \|\hat{\boldsymbol{x}}^{(1)} - \boldsymbol{a}_i\|\right)}{K}$

{\fontfamily{qcr}\selectfont //Compute cost function value}
\STATE $\hat{f}_1 \leftarrow$ \eqref{eq:mle}, using $\hat{\boldsymbol{x}}^{(1)}$ and $\widehat{\delta}_i^{(1)}$

{\fontfamily{qcr}\selectfont //Calculate distance estimates}
\STATE $\hat{d}_i \leftarrow$ \eqref{eq:dist_est}

{\fontfamily{qcr}\selectfont //Calculate relative error}
\STATE $e_i \leftarrow$ \eqref{eq:rel_error}

{\fontfamily{qcr}\selectfont //Detect attackers}
\WHILE{$e_m = \max \{e_1, ..., e_N\} > \tau, m \in \mathcal{A} \, \& \, |\mathcal{A}| > q+1$}
	\STATE $\mathcal{S} \leftarrow \mathcal{S} \cup \left\{ m \right\}$
	\STATE $\mathcal{A} \leftarrow \mathcal{A} \setminus \left\{ m \right\}$
\ENDWHILE

{\fontfamily{qcr}\selectfont //Solve GTRS}
\STATE $\hat{\boldsymbol{y}} \leftarrow$ \eqref{eq:sr-wls}

{\fontfamily{qcr}\selectfont //Obtain updated location estimate}
\STATE $\hat{\boldsymbol{x}}^{(2)} \leftarrow \left[\hat{\boldsymbol{y}}\right]_{1:q}$

{\fontfamily{qcr}\selectfont //Estimate attack intensity}
\STATE $\widehat{\delta}_i^{(2)} \leftarrow \frac{\sum_{k=1}^K \left( d_{ik} - \|\hat{\boldsymbol{x}}^{(2)} - \boldsymbol{a}_i\| \right)}{K}$

{\fontfamily{qcr}\selectfont //Compute cost function value}
\STATE $\hat{f}_2 \leftarrow$ \eqref{eq:mle}, using $\hat{\boldsymbol{x}}^{(2)}$ and $\widehat{\delta}_i^{(2)}$

{\fontfamily{qcr}\selectfont //Obtain final location estimate}
\IF{$f_2 \leq f_1$}
	\STATE $\hat{\boldsymbol{x}} \leftarrow \hat{\boldsymbol{x}}^{(2)}$
\ELSE
	\STATE $\hat{\boldsymbol{x}} \leftarrow \hat{\boldsymbol{x}}^{(1)}$
\ENDIF
\end{algorithmic}
\label{al:sr-wls}
\end{algorithm}



\section{Simulation Results}
\label{sec:results}

This section presents a set of numerical results in order to assess the performance of the proposed algorithm, both in terms of localization accuracy and success in attacker detection. In the simulations presented here, all nodes were randomly deployed $N_{D} = 500$ times within a $20 \times 20 \, \text{m}^2$ area. The radio measurements were acquired according to~\eqref{eq:our_model}. Moreover, each of the $N$ anchor nodes was considered as corrupted $N_{C} = 100$ times for each node deployment, with its radio measurement generated by following~\eqref{eq:corrup_meas}. Unless stated otherwise, a single corrupted anchor node is considered in the following simulations and the number of measurement samples is set to $K=10$. The main metric for localization accuracy assessment is the root mean squared error (RMSE), defined as $\text{RMSE} = \sqrt{\sum_{m=1}^{M_c} \frac{\|\boldsymbol{x}_{m} - \widehat{\boldsymbol{x}}_{m}\|^2}{M_c}},$
where $\widehat{\boldsymbol{x}}_{m}$ is the estimate of the true target location, $\boldsymbol{x}_{m}$, in the $m$-th Monte Carlo, $M_c = N_{D} \times N_{C} \times N$, run.

The section is organized in two parts: the first one analyses the influence of $\tau$ on the performance of the proposed method, while the second one compares its performance against the state-of-the-art methods.

\subsection{Analysis of the Choice of $\tau$}
\label{subsec:T_analysis}

Figs. \ref{fig:RMSE_vs_Cond_N4} and \ref{fig:RMSE_vs_Cond_N5} illustrate the RMSE (m) versus $\delta$ (m) performance of the proposed algorithm for different values of $\tau$, when $N=4$ and $N=5$, respectively. It is worth mentioning that the case where $N=4$ corresponds to the bare minimum of the number of anchor nodes required for secure localization in 2-dimensional networks in the presence of an attacker. The proposed algorithm is compared against its two counterparts: one where perfect detection of the attacker is always available (which can be seen as a lower bound on the performance of the proposed algorithm), and the other one where no detection is performed at all (using all available information from the anchor nodes). The figures exhibit various interesting information; lets start by analyzing the proposed method in terms of the choices of $\tau$. As mentioned earlier, this parameter is intended to serve as a threshold in order to \emph{distinguish} between noise and attack. Interestingly, both figures show that for low attack intensity (e.g., $\delta \leq 2$ m), the RMSE performance somewhat betters for greater choice of $\tau$. This can be explained by the fact that large values of $\tau$ make the proposed algorithm somewhat insensible to attacks of relatively low intensity, and force it to use all available information (even the corrupted one) most of the time. Nevertheless, it turns out that this is beneficial when $\delta$ is relatively small, because it does not differ much from noise, and thus, since $N$ is low, it is actually better to take advantage of all available information than to discard any of it (even if it is corrupted). This can also be confirmed by looking at the two counterparts of the proposed algorithm, where the figures show that not performing any detection scheme and using all available information is better than employing the non-corrupted information only when $\delta$ is very low. As the attack intensity grows, the situation turns around slowly, and somewhat better localization results are obtained for lower values of $\tau$. Therefore, although it seems that there is no fixed $\tau$ that is the best for all considered range of $\delta$, it is important to note that the proposed method shows fairly good robustness to the choice of $\tau$, i.e., its performance does not suffer dramatic deterioration depending on the choice of $\tau$. Obviously, some tuning can be accomplished, but the difference in the performance is not drastic. Intuitively, the value of $\tau$ can be interpreted as the tolerance for the difference between the measured and the estimated distances in terms of the median of the measured ones. Therefore, choosing an excessive value for $\tau$ results in exorbitant tolerance, i.e., the sensibility of the proposed procedure gets numbed and might not detect some more intense attacks. To us, it seems that the choice $\tau \in \left[30,60\right]\%$ is reasonable, since it does not make the method highly sensitive nor too numb to different attack intensities. Now, if we compare these results (e.g., $\tau=30\%$) against the results where no detection is performed, we will see the maximum performance loss is roughly 1 m if one employs the proposed algorithm for $\delta \leq 2$ m and $N=4$, while for more intense attacks the proposed method brings benefits, and can reduce the localization error as high as 5 and 6 meters for $N = 4$ and $N = 5$, respectively. This surely justifies the use of the proposed method, since even in the benign scenario, i.e., when $\delta = 0$ m, there is no significant loss in the localization performance, while it can bring great advantage in malign environments with high attack intensity.
\begin{figure}
\begin{subfigure}{.5\textwidth}
\hspace*{0mm}\includegraphics[width=\textwidth]{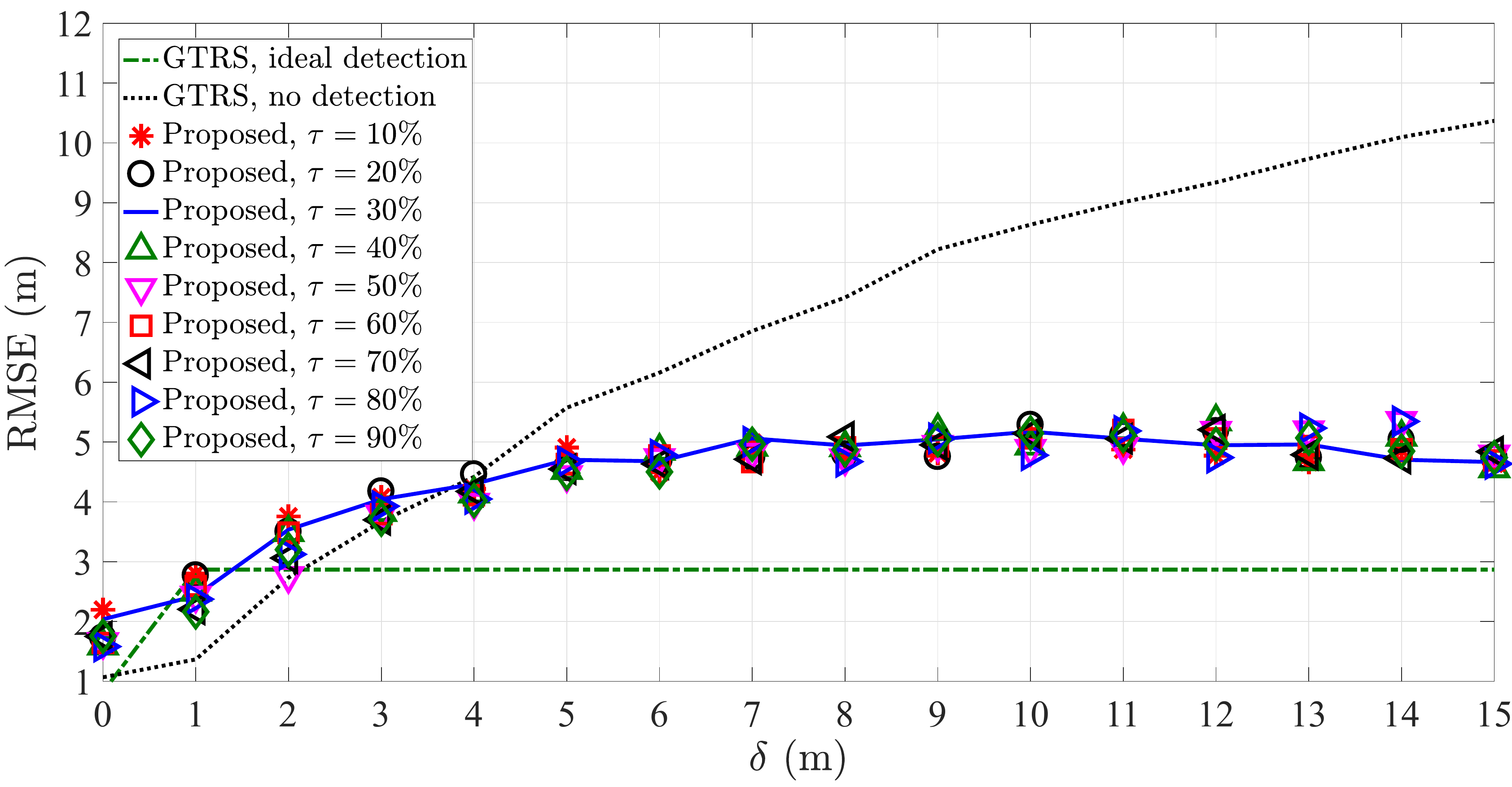}
\caption{$N=4$}
\label{fig:RMSE_vs_Cond_N4}
\end{subfigure}
\vspace*{3mm}
\begin{subfigure}{.5\textwidth}
\hspace*{0mm}\includegraphics[width=\textwidth]{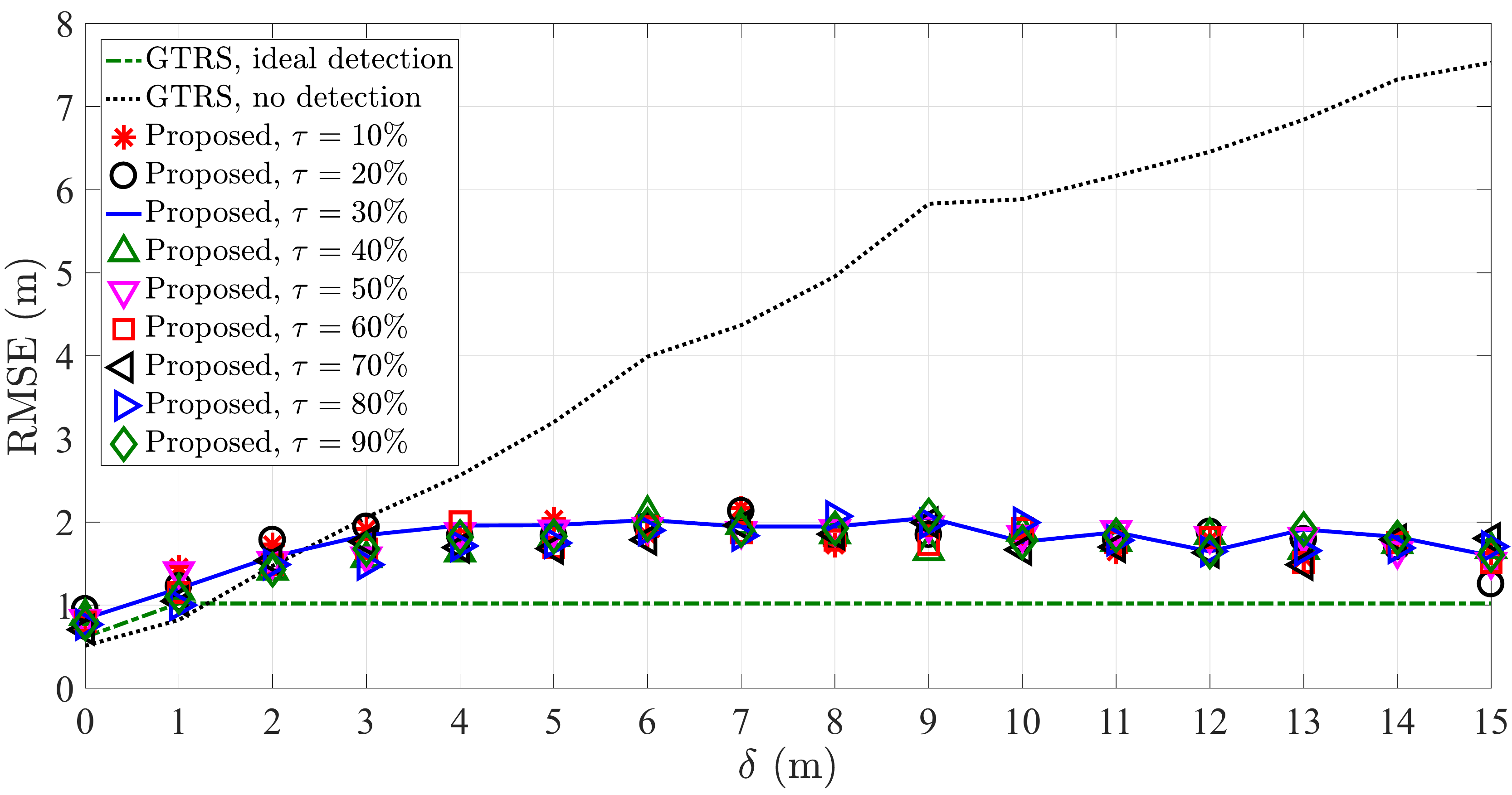}
\caption{$N=5$}
\label{fig:RMSE_vs_Cond_N5}
\end{subfigure}
\vspace*{-3mm}
\caption{RMSE (m) versus $\delta$ (m) illustration, for different choices of $\tau$, when $\sigma = 1$ m.}
\label{fig:RMSE_vs_Cond}
\end{figure}

Figs. \ref{fig:Detection_vs_N4} and \ref{fig:Detection_vs_N5} illustrate the performance of the proposed algorithm in terms of success in attacker detection ($\%$) for different values of $\delta$ (m), when $\tau=30\%$, and $N=4$ and $N=5$, respectively. As desired, one can see from the figures that for low attack intensity (e.g., $\delta \leq 2$ m), the proposed algorithm does not detect any attack in most cases, which means that it treats low-intensity attacks as noise and intentionally uses all available measurements to enhance its localization accuracy. However, as the attack intensity increases, the success in correct detection of the attacker also increases. This behavior is anticipated, because when $\delta$ grows, the attack gets more accentuated which makes it more difficult for the malicious node to hide its attack within the measurement noise. One can see that the success in attacker detection tends to intensify with the increase of $\delta$, and it goes as far as over $95\%$ and over $99\%$ for $N=4$ and $N=5$ respectively. Even though the success in attacker detection is close to being perfect for intense attacks, each error committed in this case has greater consequences in terms of the localization error, which is why even in the case of high detection success, there is still room for improvement of the proposed algorithm in terms of the localization accuracy (e.g., please see Figs. \ref{fig:RMSE_vs_Cond_N4} and \ref{fig:RMSE_vs_Cond_N5} for $\delta = 15$ m).
\begin{figure}
\begin{subfigure}{.5\textwidth}
\hspace*{0mm}\includegraphics[width=\textwidth]{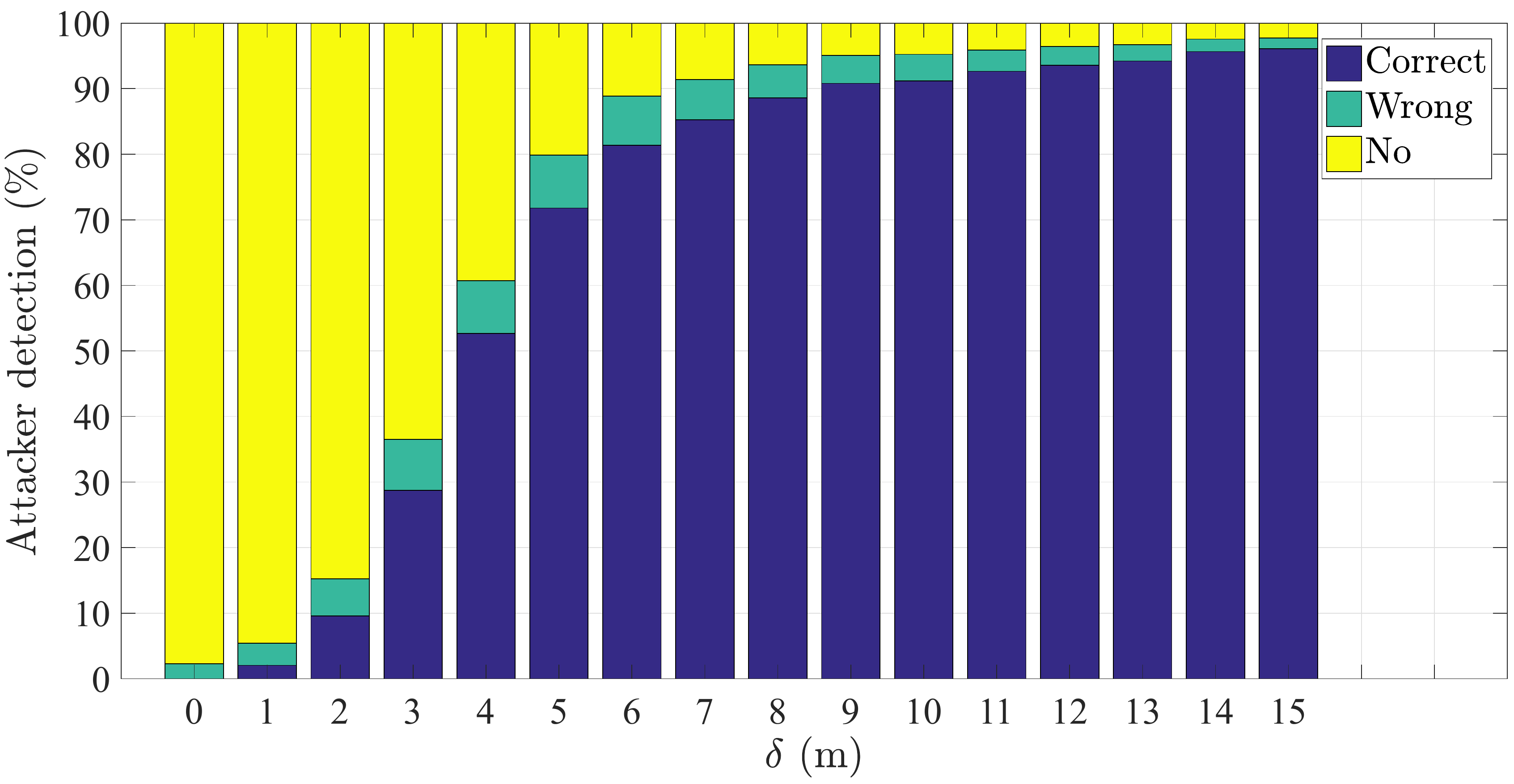}
\caption{$N=4$}
\label{fig:Detection_vs_N4}
\end{subfigure}
\vspace*{3mm}
\begin{subfigure}{.5\textwidth}
\hspace*{0mm}\includegraphics[width=\textwidth]{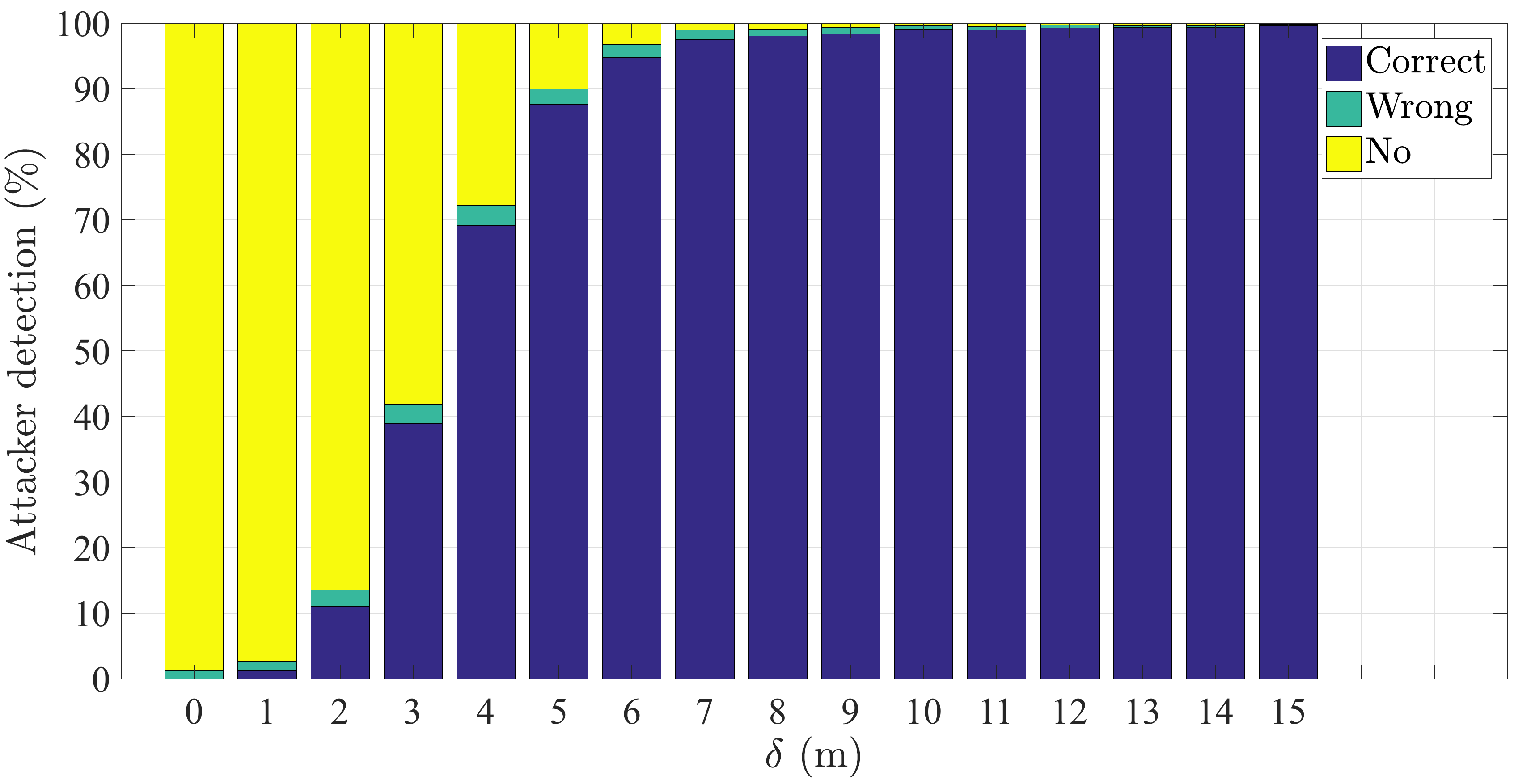}
\caption{$N=5$}
\label{fig:Detection_vs_N5}
\end{subfigure}
\vspace*{-3mm}
\caption{Attacker detection ($\%$) versus $\delta$ (m) illustration, when $\tau = 30\%$ and $\sigma = 1$ m.}
\label{fig:Detection_vs_N}
\end{figure}

Figs. \ref{fig:PD_vs_delta_for_N4} and \ref{fig:PD_vs_delta_for_N5} illustrate the probability of detection versus $\delta$ (m) comparison, for $N=4$ and $N=5$ respectively. The results in the figures show a good match between the simulations and theory, and one can see that the probability of detection of the proposed method is very close to the upper bound provided by~\eqref{eq:UPD}. Although according to \eqref{eq:LPD1}-\eqref{eq:UPD}, one should be inclined to choose a lower threshold in order to enhance $P_D$ (which is also intuitive from the procedure described in Section~\ref{subsec:prob_det}; e.g., see the line $15$ in Algorithm~\ref{al:sr-wls}), the attacker detection problem should not be considered completely independent from the localization problem, since there is some trade off between the two, at least for relatively low attack intensities (e.g., please see Figs. \ref{fig:RMSE_vs_Cond_N4} and \ref{fig:RMSE_vs_Cond_N5} for $\delta \leq 3$ m).
\begin{figure}
\begin{subfigure}{.5\textwidth}
\hspace*{0mm}\includegraphics[width=\textwidth]{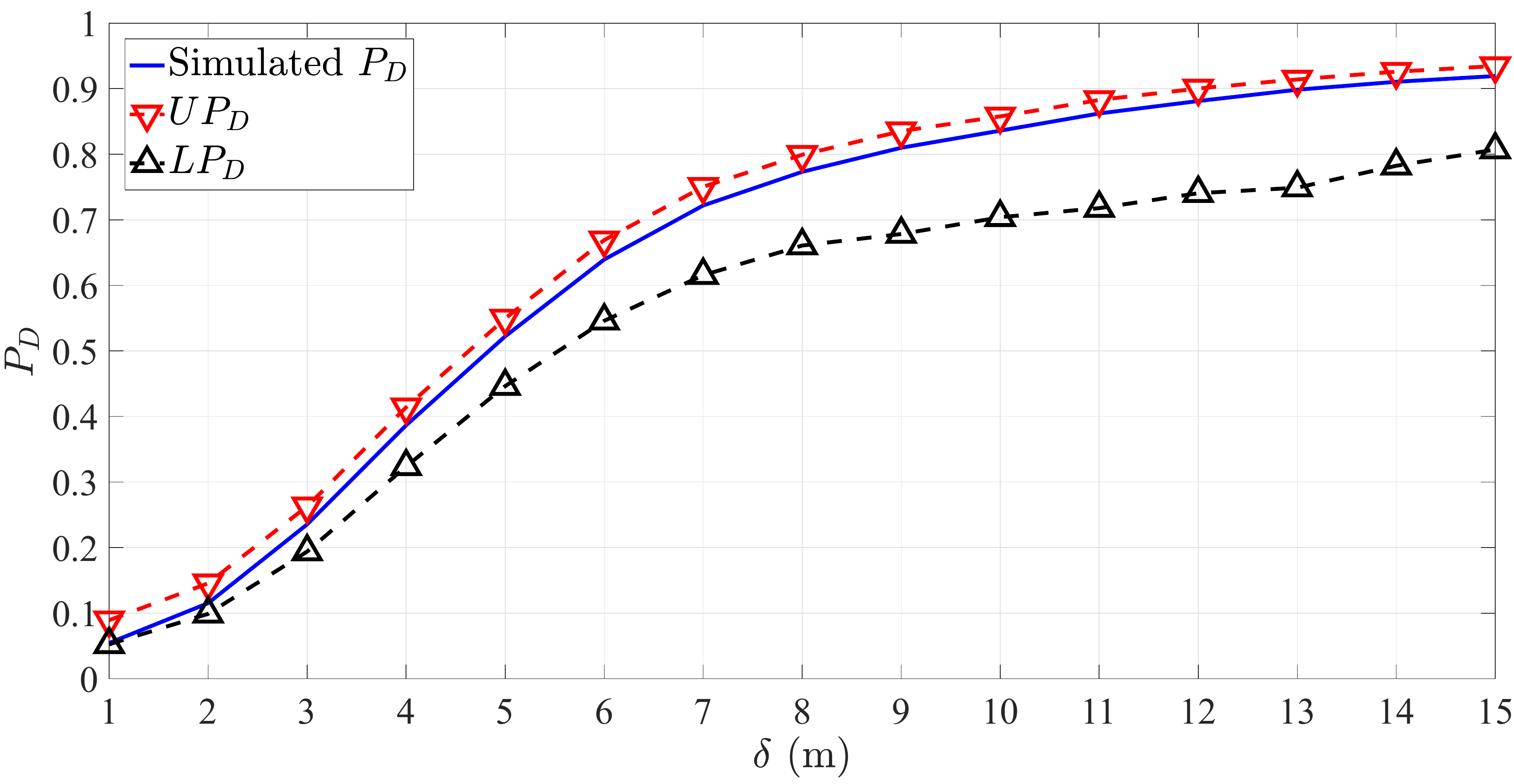}
\caption{$N=4$}
\label{fig:PD_vs_delta_for_N4}
\end{subfigure}
\vspace*{3mm}
\begin{subfigure}{.5\textwidth}
\hspace*{0mm}\includegraphics[width=\textwidth]{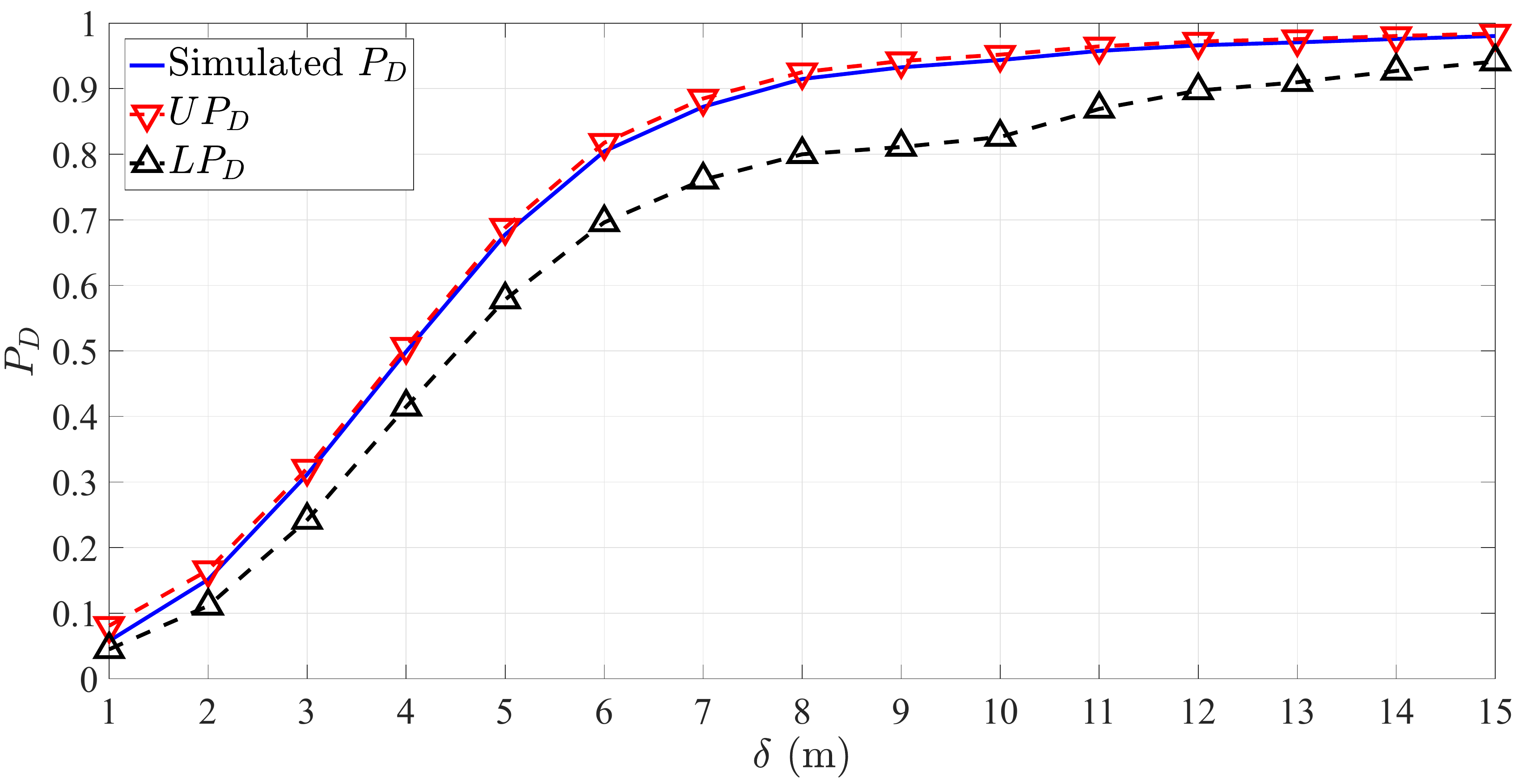}
\caption{$N=5$}
\label{fig:PD_vs_delta_for_N5}
\end{subfigure}
\vspace*{-3mm}
\caption{$P_D$ versus $\delta$ (m), when $\tau = 30\%$, $K=1$ and $\sigma = 1$ m.}
\label{fig:PD_vs_delta_and_N}
\end{figure}

Let us summarize the main findings thus far: 1) we saw that an obvious and efficient way to reduce the negative effect of a malicious attacker is to simply increase the number of (non-corrupted) anchor nodes in the network. However, this is not always feasible nor is the main goal in many applications; 2) even though theory indicates that one should set a low $\tau$ to enhance $P_D$, detection performance alone should not be considered of primary interest when the attack intensity is fairly low, since treating these attacks as noise (rather than simply disregarding them) can lead to improvement in the RMSE performance; 3) the choice of $\tau$ for the proposed scheme plays a tuning role only, and it does not have a crucial influence on its localization performance.

\subsection{Comparison with Existing Methods}
\label{subsec:results}

In this section, the performance of the proposed algorithm is compared with the WLS method in \cite{Mukhopadhyay:2018} and the VBL method in \cite{Li:2020}, which are considered as the state-of-the-art methods for secure localization and link identification in mixed LOS/NLOS environments, respectively. It is worth mentioning that the original implementation of WLS does not include any detection scheme, but rather tries to minimize the negative effect of the corrupted anchor node by employing specially designed weights. Thus, we implemented a \emph{classical} detection scheme, GLRT, to WLS in all results presented here. The main details about GLRT detection are given in Appendix \ref{app:GLRT}.

Let us start by studying the computational complexity of the algorithms. Assuming that $N_p$ and $\eta$ denote respectively the number of particles drawn in the importance sampling part for expextation derivations and the number of iteration of the VBL algorithm, and that $B_{\text{max}}$ stands for the maximum number of iterations in the bisection procedure of the proposed algorithm, Table~\ref{tab:complexity} outlines the worst case computational complexities. From the table, one can see that all three algorithms have linear computational complexity in $N$. However, the proposed solution has somewhat increased complexity in comparison with WLS due to the use of bisection procedure, while the complexity of VBL is dominated by the number of particles and the number of iterations. Hence, in terms of computational complexity, the proposed algorithm represents a fair alternative.
\begin{table}\footnotesize
\caption{Complexity Analysis of the Considered Algorithms}
\vspace*{-2mm}
  \begin{center}
	\begin{tabular}{|c|c|c|c|}
	\hline
	\textbf{Algorithm} & Complexity\\ \hline \hline
	WLS in~\cite{Mukhopadhyay:2018} & $\mathcal{O}\left( N \right)$\\ \hline
	VBL in~\cite{Li:2020} & $\mathcal{O}\left( N N_p^2\eta \right)$\\ \hline
	Proposed in Algorithm~\ref{al:sr-wls} & $\mathcal{O}\left( B_{\text{max}N} \right)$\\ \hline
	\end{tabular}
   \end{center}
\label{tab:complexity}
\end{table}

Figs.~\ref{fig:RMSE_vs_delta_N4} and~\ref{fig:RMSE_vs_delta_N5} illustrate the RMSE (m) versus $\delta$ (m) comparison of the proposed method for $\tau = 30\%$, WLS, and VBL, when $\sigma=1$ m, for $N=4$ and $N=5$ respectively. The figures clearly illustrate the superiority of the proposed method over the existing ones for practically all values of $\delta$. The figures clearly show that the existing methods cannot handle the increase in the attack intensity of the attacker, since their localization error practically grows monotonically with $\delta$, whereas the proposed estimator shows that there is a critical point about $5 \leq \delta \leq 7$ m, after which the attacker cannot deteriorate its performance.
\begin{figure}
\begin{subfigure}{.5\textwidth}
\hspace*{0mm}\includegraphics[width=\textwidth]{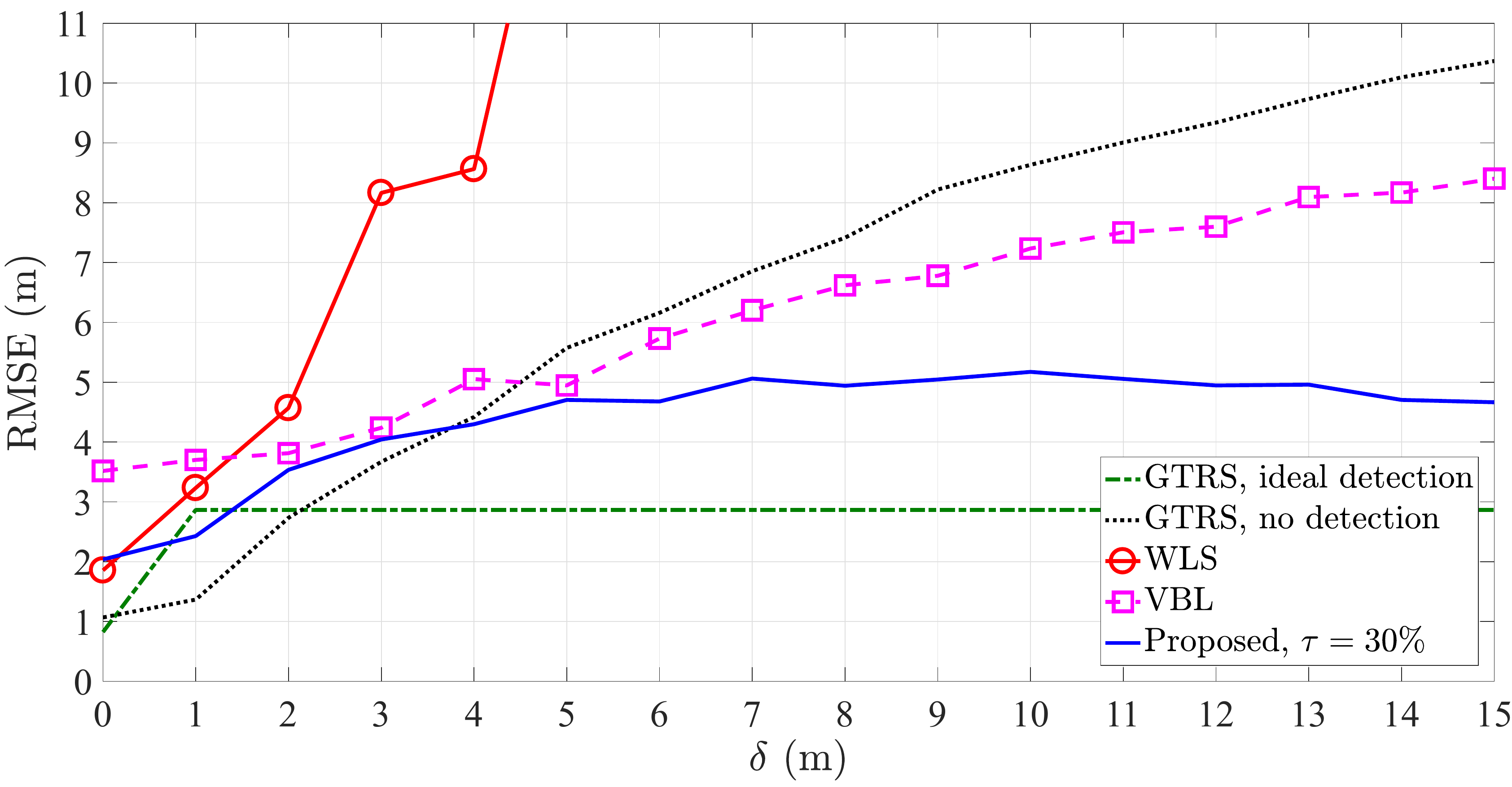}
\caption{$N=4$}
\label{fig:RMSE_vs_delta_N4}
\end{subfigure}
\vspace*{3mm}
\begin{subfigure}{.5\textwidth}
\hspace*{0mm}\includegraphics[width=\textwidth]{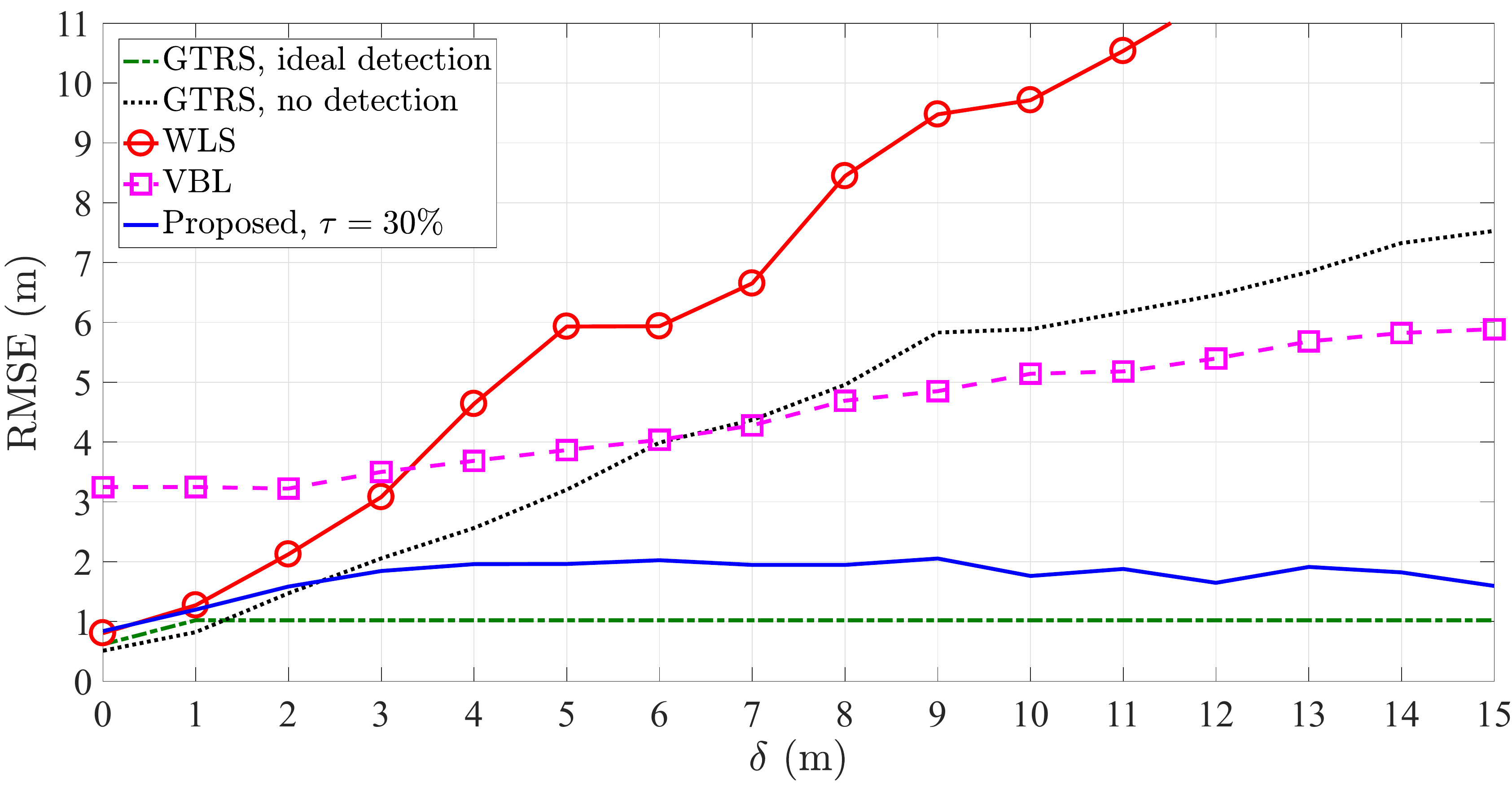}
\caption{$N=5$}
\label{fig:RMSE_vs_delta_N5}
\end{subfigure}
\vspace*{-3mm}
\caption{RMSE (m) versus $\delta$ (m) illustration, $\sigma = 1$ m.}
\label{fig:RMSE_vs_delta}
\end{figure}

Figs.~\ref{fig:PD_vs_delta_N4_soa} and~\ref{fig:PD_vs_delta_N5_soa} illustrate the $P_D$ versus $\delta$ (m) comparison of the proposed method for $\tau = 30\%$, WLS, and VBL, when $\sigma=1$ m, for $N=4$ and $N=5$ respectively. The figures show superior detection performance of the proposed method over the existing ones for $\delta \geq 6$ m. As desired, for low values of $\delta$ the proposed scheme trades the $P_D$ performance for enhanced localization accuracy, as explained in Section \ref{subsec:T_analysis}.
\begin{figure}
\begin{subfigure}{.5\textwidth}
\hspace*{0mm}\includegraphics[width=\textwidth]{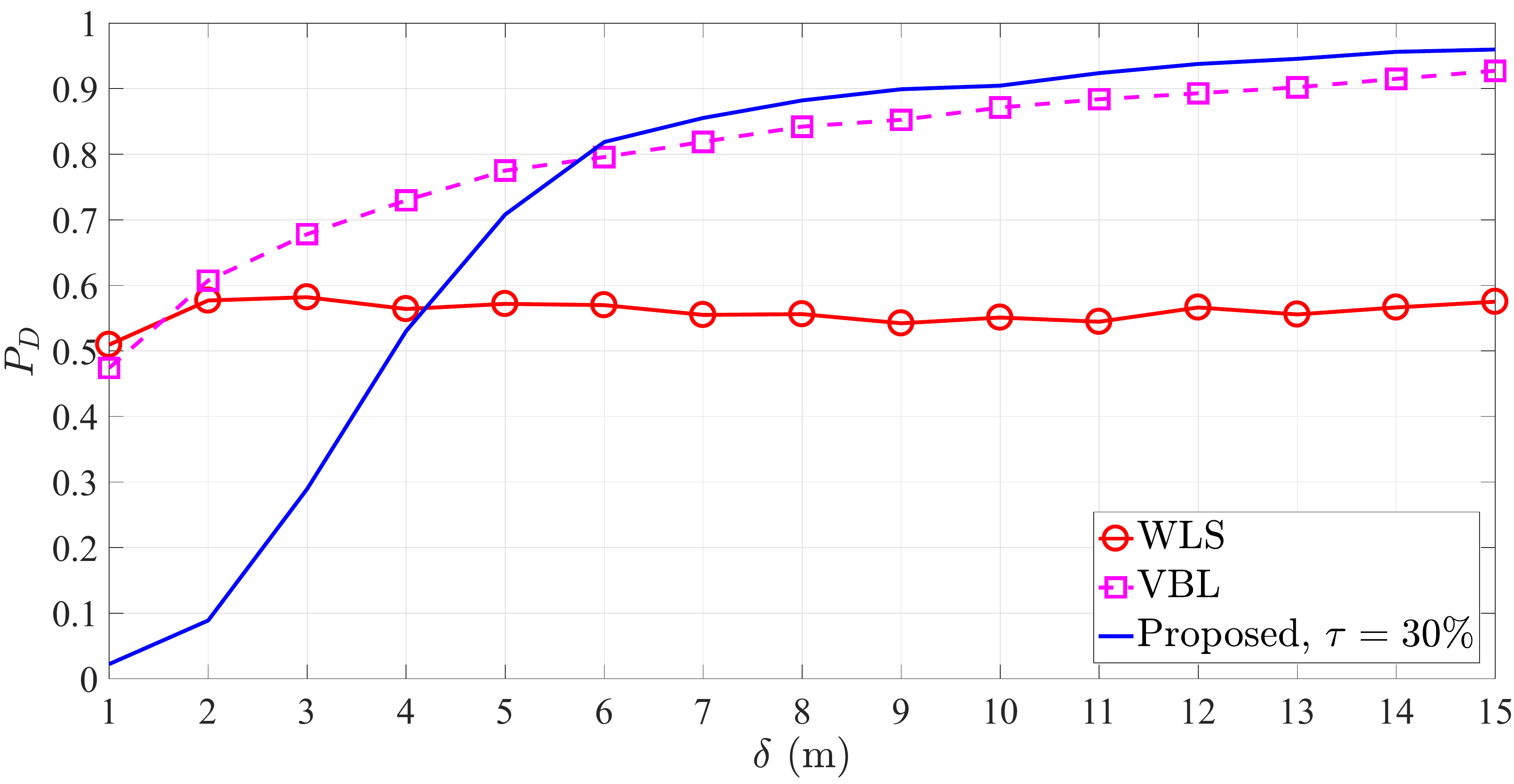}
\caption{$N=4$}
\label{fig:PD_vs_delta_N4_soa}
\end{subfigure}
\vspace*{3mm}
\begin{subfigure}{.5\textwidth}
\hspace*{0mm}\includegraphics[width=\textwidth]{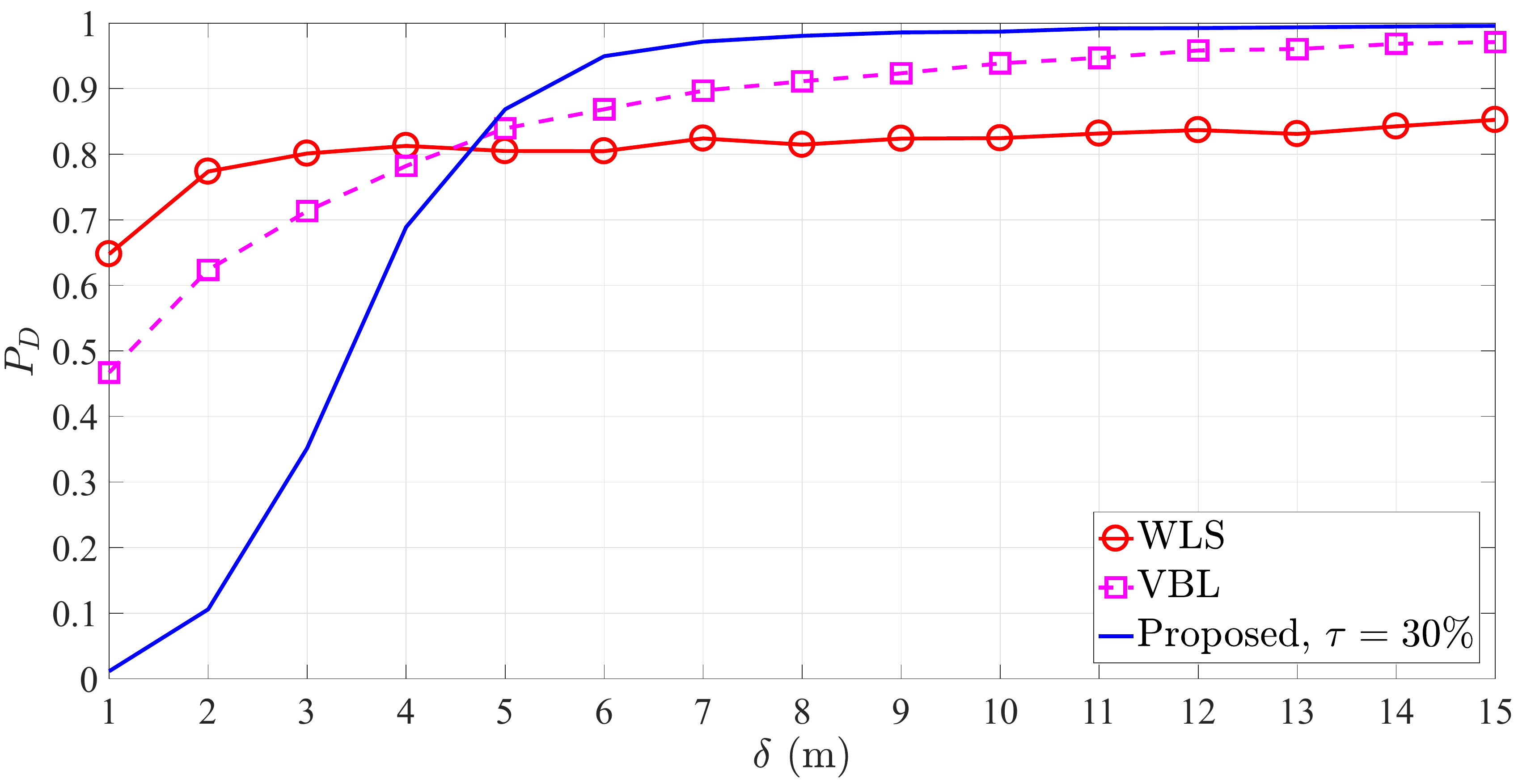}
\caption{$N=5$}
\label{fig:PD_vs_delta_N5_soa}
\end{subfigure}
\vspace*{-3mm}
\caption{$P_D$ versus $\delta$ (m), when $T = 30\%$ and $\sigma = 1$ m.}
\label{fig:PD_vs_delta_soa}
\end{figure}

Lastly, Figs. \ref{fig:RMSE_vs_delta_2_att} and \ref{fig:PD_vs_delta_2_att} illustrate the RMSE (m) and $P_D$ (i.e, the probability of false alarm, $P_{FA}$) versus $\delta$ (m) comparisons respectively, for $N=6$, when two attackers are present at any moment. In this setting, all possible pairs of anchor nodes were considered as corrupted, $N_C$ number of times. It is worth mentioning that, according to the procedure outlined in Algorithm \ref{al:sr-wls}, it is possible to \emph{detect} up to three \emph{attackers}, since at least three \emph{honest} anchor nodes are needed in order to solve the localization problem in a 2-dimensional space. Once again, the proposed solution exhibits superior RMSE performance by far, while guaranteeing at the same time the highest $P_D$ and lowest $P_{FA}$.
\begin{figure}
\begin{subfigure}{.5\textwidth}
\hspace*{0mm}\includegraphics[width=\textwidth]{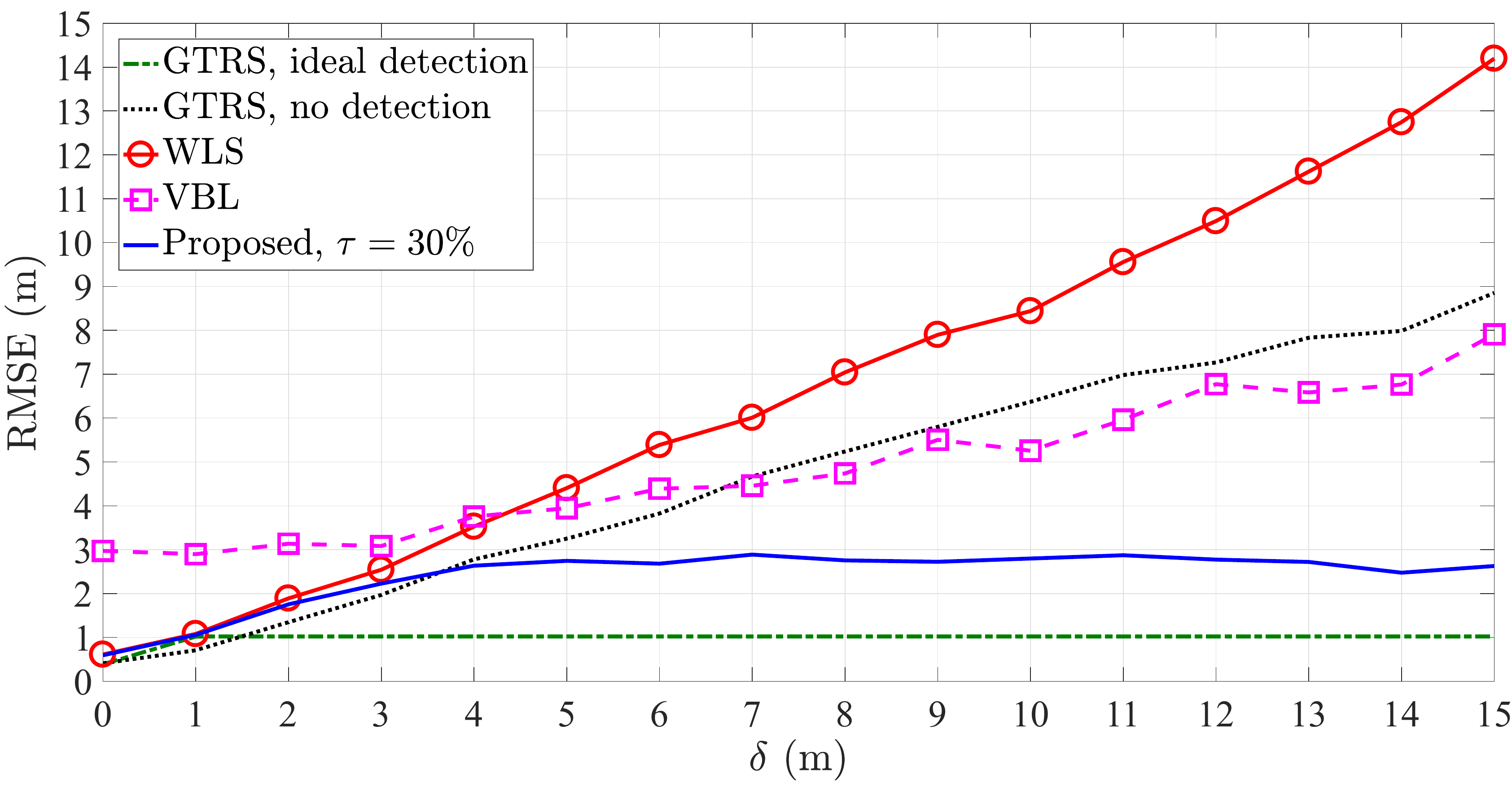}
\caption{RMSE (m) versus $\delta$ (m) comparison}
\label{fig:RMSE_vs_delta_2_att}
\end{subfigure}
\vspace*{3mm}
\begin{subfigure}{.5\textwidth}
\hspace*{0mm}\includegraphics[width=\textwidth]{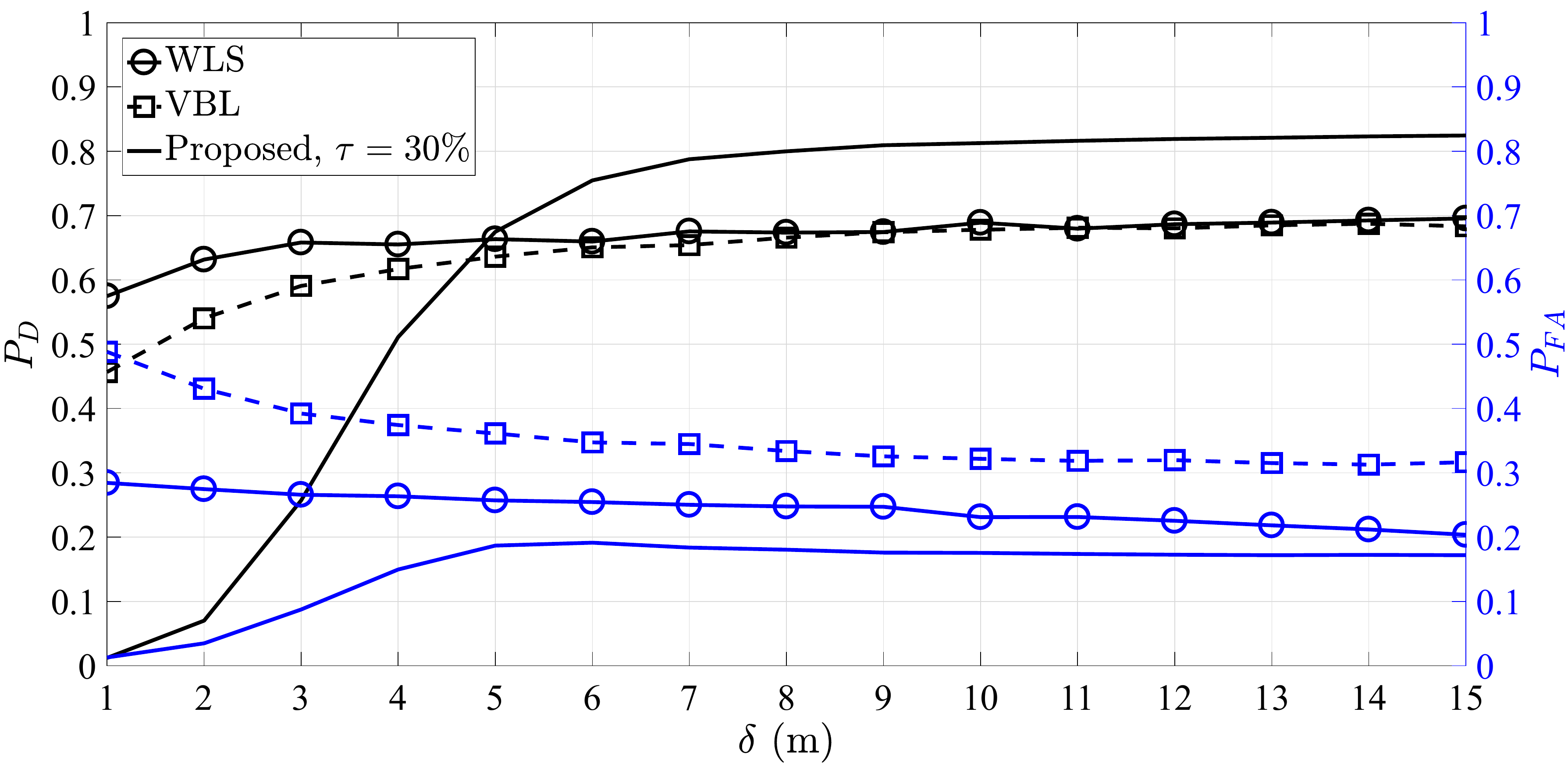}
\caption{$P_D$/$P_{FA}$ versus $\delta$ (m) comparison}
\label{fig:PD_vs_delta_2_att}
\end{subfigure}
\vspace*{-3mm}
\caption{Performance comparison in the presence of two attackers, when $N=6$ and $\sigma = 1$ m.}
\label{fig:RMSE_and_PD_vs_delta_2_att}
\end{figure}



\section{Conclusions}
\label{sec:conclusions}

This work presented a novel approach for secure target localization in randomly deployed wireless networks in the presence of malicious adversaries. This is an important problem since strengthening the security of current non-secure systems or generalization of existing secure localization systems to ad hoc scenarios will enable additional reliable safety parameter (location) to be employed for digital interactions in more general contexts (such as social media, health monitoring or surveillance systems). The proposed algorithm is based on TW-TOA measurements, where external attackers are considered to corrupt (spoof) some of the anchor nodes' measurements, in the sense that they reported enlarged distance measurements to the target. The proposed algorithm can be broken down into three main steps: 1) clustering, in which an initial estimation of the target location was obtained by using WCM, 2) attacker detection, in which the initial estimation is used to detect attackers via threshold-based keying of the relative error between the measured and estimated distances, and 3) localization, in which the localization problem is solved by converting it into a GTRS and solving it by means of a bisection procedure. The proposed method was assessed through a set of simulation results, where it showed promising results from both localization accuracy and success in attacker detection perspectives. Although the new method exhibited good RMSE performance in the considered settings, there is still room for further improvement, which is clear from the performance margin between itself and the employed lower bound obtained by using the proposed localization estimator when perfect attacker detection is available. Therefore, this work represents our first step towards secure localization in randomly deployed networks, with our ultimate goal being achievement of the performance of localization algorithms in benign environments. Nonetheless, to the best of authors' knowledge, this is the first work that treats the secure localization problem in a conceptually novel approach by unifying the localization and attacker detection problems, instead of treating them separately.

A possibly interesting direction for future research might be adaptation of the proposed algorithm (or development of novel ones) for the localization problem based on RSS measurements, since they are widely available in various devices. Also, generalization of the proposed scheme to the case where multiple coordinated attackers are present in the network may be of interest. Finally, location attacks, in which attackers lie about their true locations might be of interest in some practical applications as well.



\section*{Acknowledgements}
The authors would like to express their sincere gratitude to Prof. Brook Luers from the University of Michigan and Prof. Dilip Sarwate from the University of Illinois for their valuable posts on statistical problems publicly available online, which helped us derive $LPD1$. This work was partially supported by Funda\c{c}\~{a}o para a Ci\^{e}ncia e a Tecnologia under Projects UIDB/04111/2020 and foRESTER PCIF/SSI/0102/2017.



\begin{appendices}

\section{Calculation of the Probability of Detection}
\label{app:PD}

A lower bound on the probability of detection in~\eqref{eq:prob_det_gen} can be calculated as follows.
\begin{equation}
\begin{array}{l}
P\left( e_a > \underset{i: \, i \neq a}{\max}\left\{ e_i, \, \tau \right\} \right) =\\
1 - P\left( \abs{y_a} \leq \underset{i, \, i \neq a}{\max}\left\{ \abs{y_i}, \, \tau \right\} \right) =\\
1 - P\left( \bigcup\limits_{i: \, i \neq a} \abs{y_a} \leq \abs{y_i} \, \cup \, \abs{y_a} \leq \tau \right) \geq\\
1 - \left( \displaystyle\sum_{i: \, i\neq a} P\left( \, \abs{y_a} \leq \abs{y_i} \right) + P\left( \, \abs{y_a} \leq \tau \right) \right).
\end{array}
\label{eq:PD_der1}
\end{equation}

The last factor in~\eqref{eq:PD_der1} can be calculated~\cite{Tsagris:2014} as
\begin{equation}
\begin{array}{l}
P(e_a \leq \tau) = P\left( \, \abs{y_a} \leq \tau \right) =\\
\frac{1}{2}\left[ \text{erf}\left( \frac{\tau+\mu_{y_a}}{\sigma_{y_a}\sqrt{2}} \right) + \text{erf}\left( \frac{\tau-\mu_{y_a}}{\sigma_{y_a}\sqrt{2}} \right) \right] =\\
1- \left( Q\left( \frac{\tau+\mu_{y_a}}{\sigma_{y_a}} \right) + Q\left( \frac{\tau-\mu_{y_a}}{\sigma_{y_a}} \right) \right),
\end{array}
\nonumber
\end{equation}
where $\text{erf}(z) = \frac{2}{\sqrt{\pi}} \int_{0}^z e^{-t^2}dt$ is the error function and $Q(\bullet)$ represents the $Q$-function.

To find the remaining factors in~\eqref{eq:PD_der1}, one needs to solve the following integral
\begin{equation}
P(\,\abs{y_a} - \abs{y_i} < 0) = \iint\limits_{R} p_{y_ay_i} dy_ady_i,
\label{eq:orig_int}
\end{equation}
where $p_{y_ay_i}$ is the joint probability density function of $y_a$ and $y_i$, and the region $R = \left\{ [y_a, y_i]^T: \, \abs{y_a} < \abs{y_i} \right\}$ as illustrated on the left-hand side in Fig.~\ref{fig:prob_det}.
\begin{figure}
\centering
\hspace*{-0mm}\includegraphics[width=\linewidth]{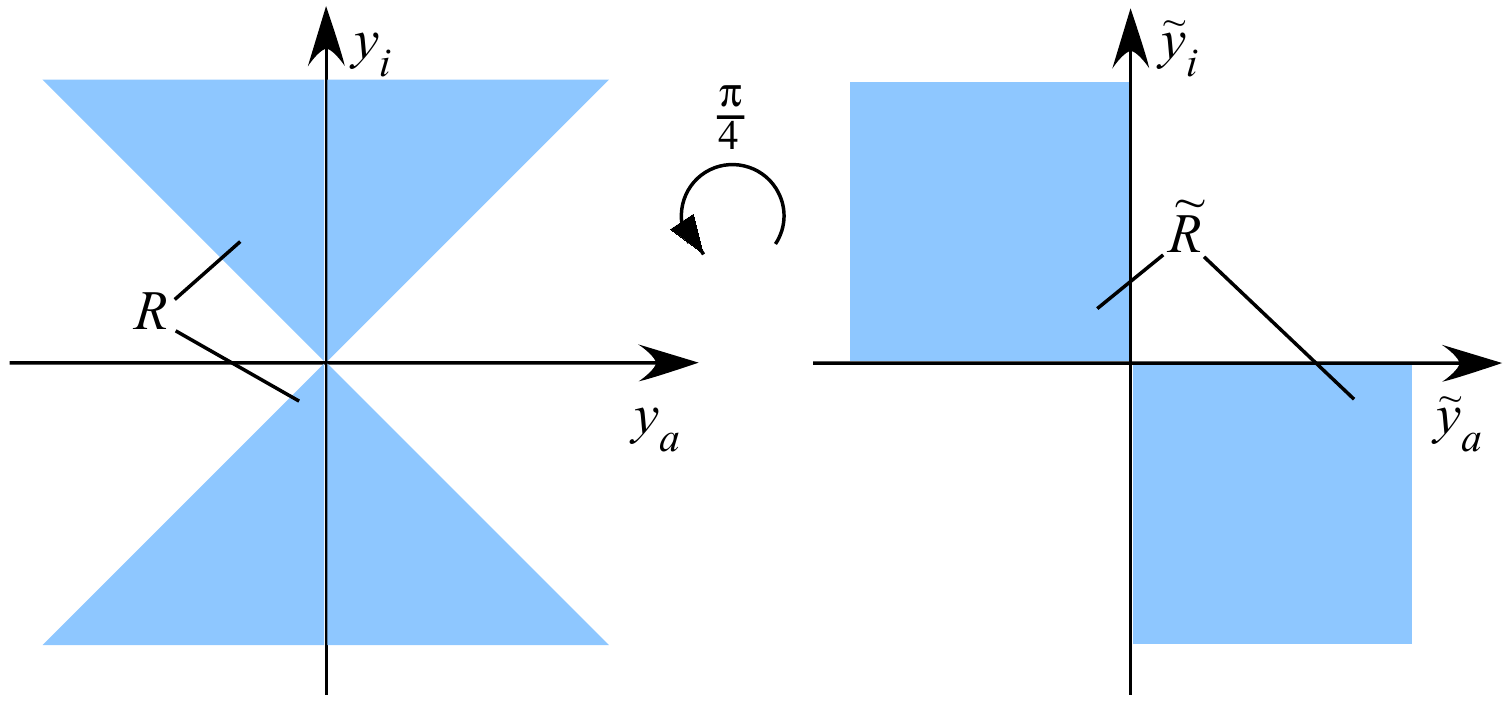}
\vspace*{-5mm}\caption{Illustration of the original (left-hand side) and rotated (right-hand side) regions of interest.}
\label{fig:prob_det}
\end{figure}

Due to the independence of $y_a$ and $y_i$ and the fact that they have equal variances, the distribution of $[y_a \, y_i]^T$ does not change under an orthogonal transformation $\boldsymbol{W}$, i.e., we can rotate the region $R$ without modifying the probability as
\begin{equation}
P \left( 
\begin{bmatrix}
y_a\\
y_i
\end{bmatrix} \in R
\right) = 
P \left( 
\boldsymbol{W}\begin{bmatrix}
y_a\\
y_i
\end{bmatrix} \in \widetilde{R}
\right),
\nonumber
\end{equation}
where $\widetilde{R} = \left\{ \boldsymbol{W} \boldsymbol{r}:  \, \boldsymbol{r} \in R \right\}$. So, by rotating the region $R$ by $\frac{\pi}{4}$ counterclockwise around the origin (see the right-hand side of Fig.~\ref{fig:prob_det}), i.e., by setting
\begin{equation}
\boldsymbol{W} =
\begin{bmatrix}
\cos\left(\frac{\pi}{4}\right) & -\sin\left(\frac{\pi}{4}\right)\\
\sin\left(\frac{\pi}{4}\right) & \cos\left(\frac{\pi}{4}\right)
\end{bmatrix} =
\frac{1}{\sqrt{2}}
\begin{bmatrix}
1 & -1\\
1 & 1
\end{bmatrix},
\end{equation}
the integral in~\eqref{eq:orig_int} can be calculated fairly straightforwardly. For the sake of notation simplicity, let us define
\begin{equation}
\boldsymbol{W}
\begin{bmatrix}
y_a\\
y_i
\end{bmatrix} =
\begin{bmatrix}
\tilde{y}_a\\
\tilde{y}_i
\end{bmatrix},
\nonumber
\end{equation}
where $\tilde{y}_a \sim \mathcal{N}(\tilde{\mu}_{y_a}, \sigma_{y_a}^2)$ and $\tilde{y}_i \sim \mathcal{N}(\tilde{\mu}_{y_i}, \sigma_{y_i}^2)$, with $\tilde{\mu}_{y_a} = \frac{1}{\sqrt{2}}\left( \mu_{y_a} - \mu_{y_i} \right)$ and $\tilde{\mu}_{y_i} = \frac{1}{\sqrt{2}}\left( \mu_{y_a} + \mu_{y_i} \right)$. Hence, the integral in~\eqref{eq:orig_int} is solved as
\begin{equation}
\begin{array}{l}
P\left( \, \abs{y_a} < \abs{y_i} \right) =\\
P\left( \tilde{y}_a < 0 \, \cap \, \tilde{y}_i > 0 \right) + P\left( \tilde{y}_a > 0 \, \cap \, \tilde{y}_i < 0 \right) =\\
P\left( \tilde{y}_a - \frac{\tilde{\mu}_{y_a}}{\sigma_{y_a}} < - \frac{\tilde{\mu}_{y_a}}{\sigma_{y_a}} \right) P\left( \tilde{y}_i - \frac{\tilde{\mu}_{y_i}}{\sigma_{y_i}} > - \frac{\tilde{\mu}_{y_i}}{\sigma_{y_i}} \right) +\\
P\left( \tilde{y}_a - \frac{\tilde{\mu}_{y_a}}{\sigma_{y_a}} > - \frac{\tilde{\mu}_{y_a}}{\sigma_{y_a}} \right) P\left( \tilde{y}_i - \frac{\tilde{\mu}_{y_i}}{\sigma_{y_i}} < - \frac{\tilde{\mu}_{y_i}}{\sigma_{y_i}} \right) =\\
Q\left( \frac{\tilde{\mu}_{y_a}}{\sigma_{y_a}} \right) Q\left( -\frac{\tilde{\mu}_{y_i}}{\sigma_{y_i}} \right) + Q\left( -\frac{\tilde{\mu}_{y_a}}{\sigma_{y_a}} \right) Q\left( \frac{\tilde{\mu}_{y_i}}{\sigma_{y_i}} \right).
\end{array}
\nonumber
\end{equation}

Finally, the lower bound on the probability of attacker detection in~\eqref{eq:prob_det_gen} is given by
\begin{equation}
\begin{array}{l}
P_D \geq LPD1 =\\
1 \hspace*{-0.5mm} - \hspace*{-0.5mm} \left( \hspace*{-0.5mm} \displaystyle\sum_{i: \, i\neq a} \hspace*{-1mm} Q\left( \frac{\tilde{\mu}_{y_a}}{\sigma_{y_a}} \right) \hspace*{-0.5mm} Q\left( -\frac{\tilde{\mu}_{y_i}}{\sigma_{y_i}} \right) \hspace*{-1mm} + \hspace*{-0.5mm} Q\left( -\frac{\tilde{\mu}_{y_a}}{\sigma_{y_a}} \right) \hspace*{-0.5mm} Q\left( \frac{\tilde{\mu}_{y_i}}{\sigma_{y_i}} \right) +\right.\\ \left. 1- \left( Q\left( \frac{\tau+\mu_{y_a}}{\sigma_{y_a}} \right) + Q\left( \frac{\tau-\mu_{y_a}}{\sigma_{y_a}} \right) \right) \hspace*{-1mm} \right).
\end{array}
\label{eq:LPD1}
\end{equation}

The lower bound in~\eqref{eq:LPD1} is a union bound, which might not be sufficiently tight in all scenarios. Therefore, we can make it tighter as
\begin{equation}
P_D \geq LP_D = \max\left\{ LPD1, LPD2 \right\},
\label{eq:LPD}
\end{equation}
where $LPD2$ is another lower bound on $P_D$ derived as
\begin{equation}
\begin{array}{l}
P_D \geq LPD2 =\\
P\left(e_a > \tau \right) \times P\left( \underset{i: \, i \neq a}{\max}\left\{ e_i \right\} \leq \tau \right) =\\
\left[ Q\left( \frac{\tau+\mu_{y_a}}{\sigma_{y_a}} \right) + Q\left( \frac{\tau-\mu_{y_a}}{\sigma_{y_a}} \right) \right] \times\\
\displaystyle\prod_{i: i\neq a} \left[ 1 - \left( Q\left( \frac{\tau+\mu_{y_i}}{\sigma_{y_i}} \right) + Q\left( \frac{\tau-\mu_{y_i}}{\sigma_{y_i}} \right) \right) \right].
\end{array}
\label{eq:LPD2}
\end{equation}

Similarly, we can upper-bound $P_D$ as
\begin{equation}
\begin{array}{l}
P_D \leq UP_D = P\left(e_a > \tau \right) =\\
Q\left( \frac{\tau+\mu_{y_a}}{\sigma_{y_a}} \right) + Q\left( \frac{\tau-\mu_{y_a}}{\sigma_{y_a}} \right).
\end{array}
\label{eq:UPD}
\end{equation}

\section{Generalized Trust Region Sub-problems}
\label{app:GTRS}

GTRS is characterized by minimizing a quadratic objective function over a quadratic constraint, which makes the problem non-convex in general. Even so, it is a monotonically decreasing function over an interval that we can calculate fairly easily, which is why GTRS is convenient for solving via bisection, see Fig.~\ref{fig:GTRS}.
\begin{figure}
\centering
\hspace*{-0mm}\includegraphics[width=\linewidth]{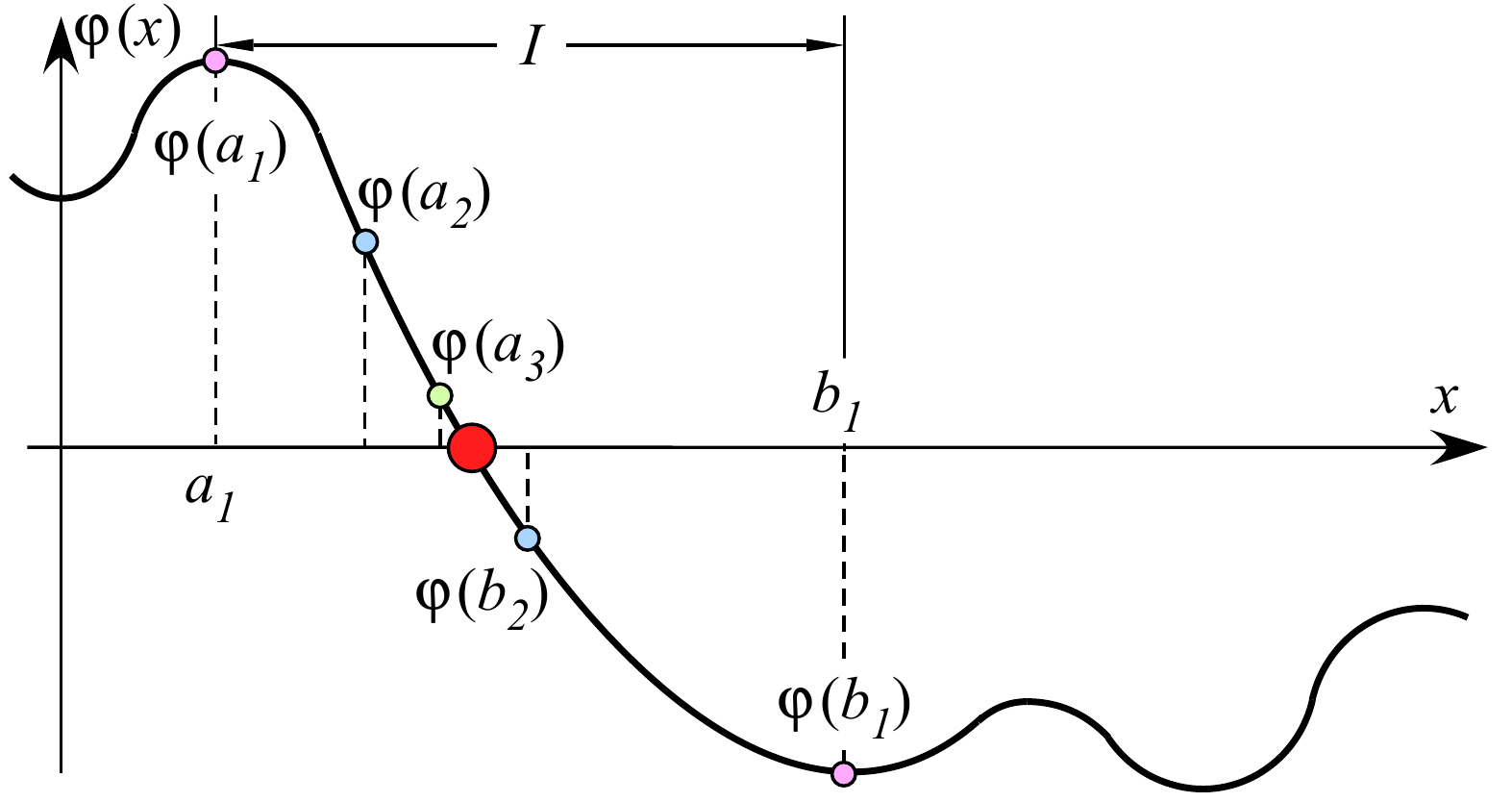}
\vspace*{-5mm}\caption{Illustration of the suitability of GTRS for solving via bisection.}
\label{fig:GTRS}
\end{figure}

According to \cite[Theorem 3.2]{More:1993}, $\boldsymbol{y} \in \mathbb{R}^{q+1}$ is an optimal solution to~\eqref{eq:sr-wls} if, and only if, there is $\lambda \in \mathbb{R}$ such that
\begin{equation}
\begin{array}{c}
\left(\boldsymbol{H}^T\boldsymbol{W}^T\boldsymbol{W}\boldsymbol{H} + \lambda\boldsymbol{F}\right)\boldsymbol{y} = \boldsymbol{H}^T\boldsymbol{W}^T\boldsymbol{W}\boldsymbol{h} - \lambda \boldsymbol{f}\\
\boldsymbol{y}^T \boldsymbol{F} \boldsymbol{y} + 2 \boldsymbol{f}^T \boldsymbol{y} = 0\\
\boldsymbol{H}^T\boldsymbol{W}^T\boldsymbol{W}\boldsymbol{H} + \lambda\boldsymbol{F} \succeq \boldsymbol{0}.
\end{array}
\nonumber
\end{equation}
Therefore, the optimal solution of~\eqref{eq:sr-wls} is
\begin{equation}
\hat{\boldsymbol{y}}(\lambda) = \left(\boldsymbol{H}^T\boldsymbol{W}^T\boldsymbol{W}\boldsymbol{H} + \lambda\boldsymbol{F}\right)^{-1} \left(\boldsymbol{H}^T\boldsymbol{W}^T\boldsymbol{W}\boldsymbol{h} - \lambda \boldsymbol{f}\right),
\nonumber
\end{equation}
where $\lambda$ is the unique solution of
\begin{equation}
\varphi(\lambda) = 0, \,\,\, \lambda \in I,
\nonumber
\end{equation}
the function $\varphi(\lambda) = \hat{\boldsymbol{y}}(\lambda)^T \boldsymbol{F} \hat{\boldsymbol{y}}(\lambda) + 2 \boldsymbol{f}^T \hat{\boldsymbol{y}}(\lambda)$ and the interval $I = \left( -\frac{1}{\lambda_{\text{max}}(\boldsymbol{F}, \boldsymbol{H}^T\boldsymbol{W}^T\boldsymbol{W}\boldsymbol{H})}, \infty \right)$, with $\lambda_{\text{max}}$ being the maximum eigenvalue of $\left(\boldsymbol{H}^T\boldsymbol{W}^T\boldsymbol{W}\boldsymbol{H}\right)^{-\frac{1}{2}} \boldsymbol{F} \left(\boldsymbol{H}^T\boldsymbol{W}^T\boldsymbol{W}\boldsymbol{H}\right)^{-\frac{1}{2}}$.

\section{Generalized Likelihood Ratio Test for WLS}
\label{app:GLRT}

For the purpose of testing, we assume two hypotheses, i.e., $H_0: d_{i,k} = \| \boldsymbol{x} - \boldsymbol{a}_i \| + n_{i,k}$ and $H_1: d_{i,k} = \| \boldsymbol{x} - \boldsymbol{a}_i \| + \delta_i + n_{i,k}$. Then, according to the two hypotheses, one can write the respective likelihood functions as follows.
\begin{equation}
p\left(\boldsymbol{d_{i}}|H_0\right) = c \exp\left\{ \frac{1}{2\sigma^2}\displaystyle\sum_{k=1}^K \left( d_{i,k} - \| \boldsymbol{x} - \boldsymbol{a}_i \| \right)^2 \right\},
\nonumber
\end{equation}
\begin{equation}
p\left(\boldsymbol{d_{i}}|H_1\right) = c \exp\left\{ \frac{1}{2\sigma^2}\displaystyle\sum_{k=1}^K \left( d_{i,k} - \| \boldsymbol{x} - \boldsymbol{a}_i \| - \delta_i \right)^2 \right\},
\nonumber
\end{equation}
with $c=\frac{1}{\left(2\pi \sigma\right)^{K/2}}$.

Therefore, according to GLRT \cite[Ch. 4]{Kay:1998}, we have that
\begin{equation}
\frac{p\left(\boldsymbol{d_{i}}|\widehat{\delta}_i, H_1\right)}{p\left(\boldsymbol{d_{i}}|H_0\right)}\displaystyle\mathop{\lessgtr}_{H_1}^{H_0} \gamma,
\label{eq:GLRT_orig}
\end{equation}
where $$\widehat{\delta}_i = \frac{\sum_{k=1}^K \left( d_{i,k} - \| \hat{\boldsymbol{x}}^{(WLS)} - \boldsymbol{a}_i \| \right)}{K}$$ is the ML estimate of $\delta_i$, $\hat{\boldsymbol{x}}^{(WLS)}$ is the target estimate obtained by solving the WLS in \cite{Mukhopadhyay:2018}, and $\gamma$ represents a threshold. After some simple algebraic manipulations, it can be shown that \eqref{eq:GLRT_orig} boils down to
\begin{equation}
\widehat{\delta}_i \displaystyle\mathop{\lessgtr}_{H_1}^{H_0} \sqrt{\frac{2\sigma^2\ln(\gamma)}{K}}.
\label{eq:GLRT}
\end{equation}

The probability of false alarm can then be written as
\begin{equation}
P_{FA} = P\left( r_i > \sqrt{\frac{2\sigma^2\ln(\gamma)}{K}} \, \Bigg| \, H_0 \right) = Q \left( \sqrt{2\ln(\gamma)} \right),
\label{eq:PFA}
\end{equation}
where $$r_i = \frac{1}{K} \sum_{k=1}^K \left( d_{i,k} - \| \boldsymbol{x} - \boldsymbol{a}_i \| \right),$$ i.e., $r_i \sim \mathcal{N}\left( 0, \frac{\sigma^2}{K} \right)$, under the hypothesis $H_0$. Hence, for a chosen value of $P_{FA}$ in \eqref{eq:PFA}, one can easily calculate the value of $\gamma$ in order to solve \eqref{eq:GLRT}. Finally, one can calculate the probability of detection according to GLRT, when the estimates of the unknown parameters are obtained through WLS, as
\begin{equation}
P_D = Q \left( \sqrt{2\ln(\gamma)} - \frac{\widehat{\delta}_i\sqrt{K}}{\sigma} \right).
\nonumber
\end{equation}

\end{appendices}





\vspace*{5mm}
\begin{wrapfigure}{l}{25mm} 
\includegraphics[width=1in,height=1.25in,clip,keepaspectratio]{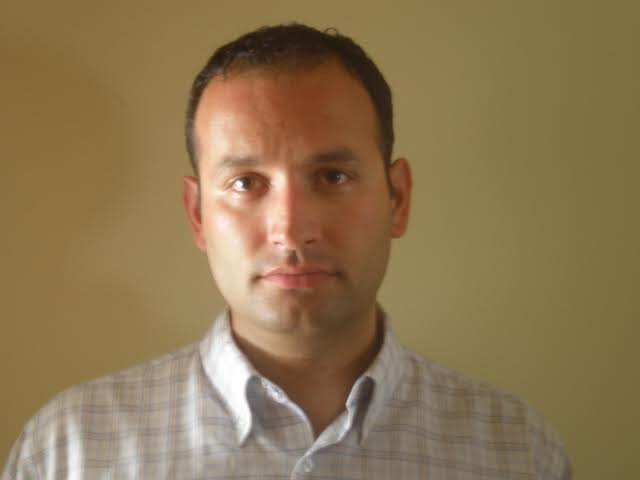}
\end{wrapfigure}\par
\textbf{Marko Beko} was born in Belgrade, Serbia, on November 11, 1977. He received the PhD degree in electrical and computer engineering from Instituto Superior T\'{e}cnico, Lisbon, Portugal, in 2008. He received the title of ``Professor com Agrega\c{c}\~{a}o'' in Electrical and Computer Engineering from Universidade Nova de Lisboa, Lisbon, Portugal, in 2018.  Currently, he is a Full Professor (Professor Catedr\'{a}tico) at the Universidade Lus\'{o}fona de Humanidades e Tecnologias, Lisbon, Portugal. He serves as an Associate Editor for the IEEE Open Journal of the Communications Society and Elsevier Journal on Physical Communication. He is the winner of the 2008 IBM Portugal Scientific Award.\par

\vspace*{5mm}
\begin{wrapfigure}{l}{25mm} 
\includegraphics[width=1in,height=1.25in,clip,keepaspectratio]{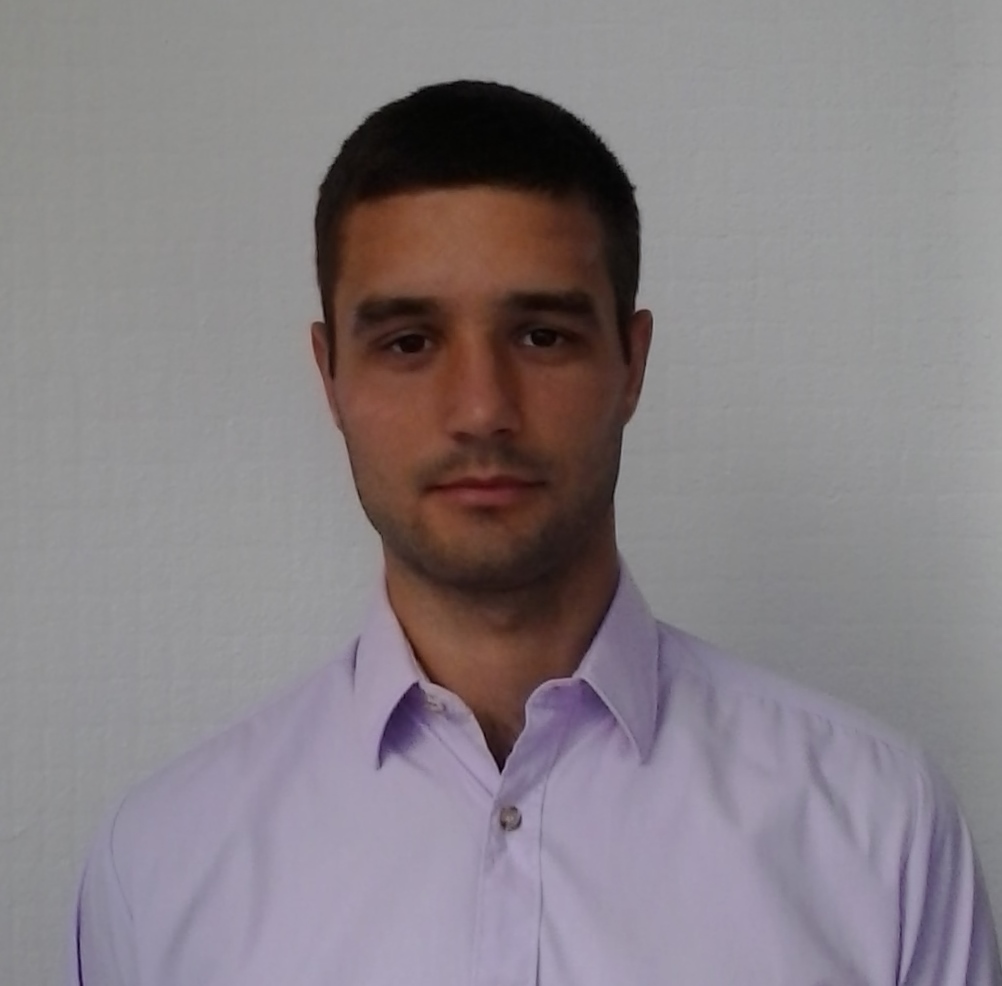}
\end{wrapfigure}\par
\textbf{Slavisa Tomic} received the M.S. degree in traffic engineering according to the postal traffic and telecommunications study program from University of Novi Sad, Serbia, in 2010, and the PhD degree in electrical and computer engineering from University Nova of Lisbon, Portugal, in 2017. He is currently an Assistant Professor at the Universidade Lus\'{o}fona de Humanidades e Tecnologias, Lisbon, Portugal. His research interests include target localization in wireless sensor networks, and non-linear and convex optimization.\par

\end{document}